\numberwithin{equation}{section}
\newcommand{\be}{\begin{equation}}
\newcommand{\ee}{\end{equation}}
\newcommand{\fg}{{\mathfrak{g}}}
\newcommand{\w}{{^{~\text{\tiny P}}\Gamma}}
\newcommand{\fLie}{{\mathbb{L}}}
\newcommand{\fI}{{\mathbb{i}}}
\newcommand{\F}{{\mathcal{Q}}}
\renewcommand{\d}{{\mathrm{d}}}
\newcommand{\D}{{\mathrm{D}}}
\newcommand{\R}{{\mathbb{R}}}
\newcommand{\rr}{{\mathbf{r}}}
\newcommand{\cc}{{\mathbf{c}}}
\newcommand{\SO}{{\mathrm{SO}}}
\newcommand{\HH}{{\mathsf{H}}}
\newcommand{\V}{{\mathsf{V}}}
\newcommand{\G}{{\mathcal{G}}}
\newcommand{\Ad}{{\mathsf{Ad}}}
\newcommand{\Ker}{{\mathsf{Ker}}}
\newcommand{\fF}{{\mathbb{F}}}
\newcommand{\fG}{{\mathrm{Lie}(\G)}}
\renewcommand{\bar}{\overline}
\newcommand{\dd}{{\mathbb{d}}}
\newcommand{\ra}{\rightarrow}
\renewcommand{\hat}{\widehat}
\newcommand{\lbr}{\llbracket}
\newcommand{\rbr}{\rrbracket}
\newtheorem{theo}{Theorem}
\newcommand{\cint}{{\int\kern-.87em{<}}}
\newcommand{\sint}{{\int\kern-.75em{\sim}}}
\newcommand{\fint}{{\int\kern-1.00em{\int}}}
\newcommand{\bb}{\mathbb}
\renewcommand{\#}{\sharp}
\let\oldmarginpar\marginpar
\renewcommand\marginpar[1]{\oldmarginpar{\color{red}\raggedright\footnotesize #1}}
\title{Angular momentum without rotation: \\ Turbocharging relationalism }
\author{
  Henrique Gomes\footnote{University of Cambridge, Trinity College, CB2 1TQ, United Kingdom; \href{mailto:gomes.ha@gmail.com}{gomes.ha@gmail.com}}\, and Sean Gryb\footnote{ University of Groningen, Faculty of Philosophy, Groningen, The Netherlands;  \href{mailto:sean.gry@gmail.com}{sean.gryb@gmail.com} }
 }
 \date{}
\begin{document}
\maketitle 
\abstract{ Newton's rotating bucket pours cold water on the naive relationalist by vividly illustrating how certain rotational effects, particularly those due to non-zero angular momentum, can depend on more than just relations between material bodies. Because of such effects, rotation has played a central role in the absolute-relational debate and poses a particularly difficult challenge to the relationalist.

In this paper, we provide a qualified response to this challenge that significantly weakens  the absolutist position. We present a theory that, contrary to  orthodoxy, can account for all rotational effects without introducing, as the absolutist does, a fixed standard of rotation. Instead, our theory posits a universal $\SO(3)$ charge that plays the role of angular momentum and couples to inter-particle relations via terms commonly seen in standard gauge theories such as electromagnetism and the Standard Model of particle physics.

Our theory makes use of an enriched form of relationalism: it adds an $\SO(3)$ structure to the traditional relational description. Our construction is made possible by the modern tools of gauge theory,  which reveal a simple relational law describing rotational effects. In this way, we can save the phenomena of Newtonian mechanics using conserved charges and relationalism. In a second paper, we will further explore the ontological and explanatory implications of the theory developed here.    
}

\clearpage

\tableofcontents

\section{Introduction}

\subsection{Rotation: the downfall of relationalism?} 
\label{sub:rotation_the_downfall_of_relationalism}


In the Scholium to the Definitions given in the \emph{Principia} \citep{newton1999principia}, Newton introduces and defends his notion of absolute space in terms of real and imagined experiments. The problem Newton set himself was the detection of motion with respect to an immovable space not subject to direct observation:
\begin{quote}
It is indeed a matter of great difficulty to discover, and effectually to
distinguish, the true motions of particular bodies from the apparent;
because the parts of that immovable space, in which those motions
are performed, do by no means come under the observation of our
senses. [p.82]\end{quote}
Despite this ``great difficulty,'' Newton offers a concrete proposal in terms of ``forces'' and ``apparent motions'':
\begin{quote}Yet the thing is not altogether desperate; for we have some arguments
to guide us, partly from the apparent motions, which are the
differences of the true motions; partly from the forces, which are the
causes and effects of the true motions. [p.82]
\end{quote}
Perhaps the most famous experiment used to illustrate the determination of such absolute motion is `Newton's bucket'. In this experiment, a bucket filled with water is spun, and the water is seen to creep up the walls of the bucket as it recedes from the axis of rotation. There is now general consensus among historians  \citep{RYNASIEWICZ1995295,RYNASIEWICZ1995133} that the bucket experiment should be read as a reaction to Descartes' views of motion, and not in the terms of modern debates concerning the nature of space-time. But regardless of the historical context, Newton's rotating bucket still provides a vivid illustration of how the receding water can be used to determine rotational motion with respect to absolute space, and thereby threaten a relationalist view.

That Newton focused attention on examples involving rotation is highly suggestive, and foreshadows the focal point of the subsequent \emph{absolute-relational debate}. The subject of this debate is space-time itself, and concerns its status either as an absolute entity in its own right --- the \emph{absolutist} position --- or as an abstraction inferred from the relations between material bodies --- the \emph{relationalist} position. In \emph{World Enough and Space-time}, John Earman \citep{Earman1989} devotes an entire chapter to the topic of rotation and its role in the absolute-relational debate. There he describes and comments on several failed historical attempts to understand rotation in relational terms:
\begin{quote}
  Those who want to deny that the success of Newton’s theory supports the absolute side of the absolute--relational controversy are obliged to produce the details of a relational theory that does as well as Newton’s in terms of explanation and prediction, or else they must fall back on general instrumentalist arguments.[p.65]
\end{quote}
He then laments that ``there are features of rotation that make it an especially difficult challenge for the relationalist.'' To drive the point home, he comments on the difficulty of providing a purely relational theory of rotation in the light of several prominent inadequate attempts:
\begin{quote}
  Newton, Huygens, Leibniz, Berkeley, Maxwell, Kant, Mach, Poincar\'e --- these are names to conjure with. The fact that not one of them was able to provide a coherent theory of the phenomena of classical rotation is at first blush astonishing.[p.89]
\end{quote}
These failures, Earman contends, are a testament to ``the difficulty of the problem'' that highlight ``the strengths of the preconceptions and confusions about the absolute-relational debate.''

In a more recent review, \citet[Section~\S6.1.1]{pooley_sub-rel} asks whether a relationalist can find dynamical laws in terms of relational quantities that can describe the full set of phenomena of Newtonian mechanics. The answer is rather pessimistic: ``As far as I know, no one has seriously attempted to construct such laws. Even so, one knows that any such laws will exhibit a particularly unattractive feature: they will not be expressible as differential equations that admit an initial value formulation.'' Similarly, in an article directed at `rehabilitating relationalism,' \cite{belot1999rehabilitating} describes how the symplectic reduction methods of \cite{marsden1992lectures}\footnote{ This work is expanded upon in \cite{butterfield2006symplectic} with a discussion of the philosophical implications. See also \cite{abraham1978foundations,marsden1995introduction} for more details. } applied to Newtonian mechanics in the presence of angular momentum ``fails to lead to a Hamiltonian theory on the relative phase space'' because it can only be expressed in terms of ``two additional variables''  that ``have no transparent geometric interpretation for relationalists.''\footnote{ These two variables are directions on a 2-sphere that represent the changing orientation of the angular momentum vector in a frame co-rotating with a rigid body. See \citet[Section~\S2.3.4]{butterfield2006symplectic} for a description of these variables. } He thus concludes that ``Newtonian physics cannot be reformulated as a Hamiltonian theory on relative phase space. Hence, there is, after all, no strict relationalist interpretation of Newtonian theory —-- \emph{rotation is the downfall of relationalism}.'' [Emphasis added.]

This now-orthodox view would seem to put the relationalist in an impossible position with regard to rotational effects due to angular momentum. Yet, to use Newton's words against him: the thing is not altogether desperate. In spite of the prevailing pessimism in the philosophical literature, in this paper we will give a fully coherent theory of classical rotation in relational terms. Our theory builds upon the well-known work of \cite{LittlejohnReinsch}, which  is more flexible than the formulations based on symplectic reduction, and is used extensively in molecular physics, chemistry and engineering for describing the rotation of rigid bodies in terms of relational quantities. Contra \cite{pooley_sub-rel} and \cite{belot1999rehabilitating}, the theory we present is expressible in terms of well-posed differential equations and makes use of geometric quantities that have a straightforward interpretation for the relationalist.

To overcome the concerns expressed above, our theory makes use of an enriched form of relationalism.\footnote{ This enriched relationalism, however, is not of the kind proposed in \citet[Section~\S III.F.2]{sklar1977space} or \citet[Ch.4,Section~\S1]{van1970introduction} (and discussed critically in \cite{maudlin1993buckets} and \cite{huggett1999manifold}), which takes relational accelerations or inertial frames as primitives. See the discussion below. } In particular, it adds an $\SO(3)$ charge that couples to particle relations in ways commonly seen in standard gauge theories such as electromagnetism and the Standard Model of particle physics. Our proposal postulates a new form of relational law involving universal coupling constants representing the $\SO(3)$ charge that can be treated no differently from other constants of nature such as Newton's gravitation constant $G$.\footnote{ In this sense, our approach resembles a proposal along the lines of a `regularity account' advocated by \cite{huggett2006regularity} but in terms of an explicit formulation of the relational law. } This new law can be expressed entirely in terms of relational quantities using a curved gauge connection. In a second paper, we will argue, in response to Earman's challenge, that this theory can match Newton's in terms of explanation and prediction, without having to appeal to instrumentalist arguments. There we explore in greater detail the ontological and explanatory implications of our new theory.

 To understand the novel features of our theory, it is useful to compare it with existing approaches. These approaches can be broadly classified by the way in which they implement different classical space-time symmetries. This is done either directly at the level of the space-time or at the level of the state-space in terms of configuration space or phase space. The basic differences between these approaches can be understood in terms of the amount of structure of the original absolute Newtonian configuration space that is retained in the relational theory. An accounting can be made in terms of the number and nature of the degrees of freedom that are removed from the Newtonian theory. A simple way to do this is to list the number of fixed standards that exist in a theory. For example, a particular neo-Newtonian theory might retain a fixed standard for inertial frames but reject a fixed standard of position. Depending on whether a theory is based on space-time, configuration-space or phase-space, it may be more or less natural to retain or reject a particular standard.

A prominent configuration-space based approach that makes use of  elementary gauge theory methods is that of \cite{Barbour_Bertotti}. This theory is strongly relationalist in that it rejects fixed global standards of position, velocity, orientation, rotation and duration. The theory is then expressed implicitly on a \emph{relative configuration space}, where all absolute standards have been removed. In order to remove the fixed standard of rotation, the theory requires the angular momentum of the universe to be zero. The phase-space-based methods of \cite{marsden1992lectures} (cf. also \cite{abraham1978foundations,marsden1995introduction,butterfield2006symplectic}) suggest that the restriction to zero angular momentum is the only possible way to implement a relational theory of rotation because, in the presence of angular momentum, two additional degrees of freedom appear that do not have any obvious relational interpretation within the phase-space formalism. As noted above, this point was emphasised in \cite{belot1999rehabilitating}. In \cite{pooley2002relationalism}, vanishing angular momentum was then touted as a \emph{prediction} of relationalism highlighting the difficulty of describing angular momentum effects in relational terms.

Similar conclusions have been reached from the point of view of space-time based approaches. Recently, it has been argued by \cite{saunders2013rethinking} and \cite{knox2014newtonian} that Newtonian mechanics should be expressed in terms of Maxwell--Huygens space-times, which \emph{do not} have a fixed standard of linear acceleration but \emph{do} have a fixed standard of rotation.\footnote{ The space-times in question are referred to as `Maxwellian space-times' in \cite{Earman1989} and `Newton-Huygens space-times' in \cite{saunders2013rethinking}. We will use the term `Maxwell--Huygens space-time' introduced in \cite{weatherall2016maxwell}. These differ from neo-Newtonian space-times in that they do not have a fixed global inertial structure.} These author's retention of a standard of rotation (but not orientation) is motivated by the desire to accommodate a universe with non-zero angular momentum.

While we accept that it is possible to describe Newtonian mechanics without introducing fixed standards of linear acceleration and orientation; we deny that, independently of metaphysical considerations, a standard of rotation is strictly necessary. We will defend our position by explicitly giving a theory that is independent of any rotational standard but can nevertheless also account for universal angular momentum. Such a formulation is made possible through the use of the modern advances of gauge theory that were not available to Earman's list of grand ``names to conjure with.''  Importantly, this machinery can accommodate non-zero curvature effects and a more general class of dynamically defined relational frames that allow for a straightforward geometric interpretation of the features of the symplectically reduced theory that \cite{belot1999rehabilitating} deemed awkward.\footnote{ Concretely, the time-dependence of the two orientation variables in a frame co-moving with a rigid body can be understood in terms of the curvature of the rotational bundle resulting in the absence of a horizontal section. See the discussions at the end of Section~\S\ref{sub:gen Lorentz}.}

\subsection{Saving the phenomena with conserved charges} 
\label{sub:saving_phenomena_without_rotation}


Our theory is expressed in terms of two distinct relational force terms: a Lorentz-like force, analogous to the one of electromagnetic theory, describing Coriolis effects; and a force term due to a mass-like quadratic potential describing centrifugal effects. Formally the novel force terms are represented on an $SO(3)$ principal fibre bundle over a base space corresponding to the relative configuration space of \cite{Barbour_Bertotti}. In describing Coriolis and centrifugal effects, our theory saves the phenomena of Newtonian mechanics.  In other words, all rotational phenomena can be described solely in terms of a relational ontology with an additional $SO(3)$ charge. Our theory can thus be viewed as both a more explicit statement of the theory presented in \cite{Barbour_Bertotti}, in that we perform an explicit reduction to a relative configuration space, and a generalisation of it to the case of non-zero angular momentum.

One of the most significant features of the new construction is that it provides an enriched relational interpretation on the rotational bundle when the angular momentum of the system is non-zero. Such an interpretation suggests a new kind of \emph{enriched relational ontology} that combines certain elements of a fully-reduced description with additional elements inspired by the global structure of the bundle.  But because the additional elements correspond to a modification of a relational dynamical law, no new accelerations or inertial frames must be included as \emph{primitives} in the ontology such as those proposed in \citet[Section~\S III.F.2]{sklar1977space} or \citet[Ch.4, Section~\S1]{van1970introduction}. Instead, inertial frames can be defined \emph{dynamically} in terms of relational quantities and the value of the $\SO(3)$ charge (see Section~\S\ref{sub:rot non issue} for a more detailed discussion of this point).

In our picture, we show that all inertial effects caused by rotational phenomena can be described directly in terms of conserved quantities on relative configuration space. This comes about because the motion along the fibres of a curved bundle is conserved if the fibres have a `rigid' geometry.\footnote{ More precisely, by `conserved motion' we mean rigid motion in the sense that the vertical velocities of the dynamical trajectories are conserved between fibres, which lie along Killing directions of the kinematical metric of the theory (see Theorem~\ref{ftnt:Killing} of Section~\S\ref{sub:gen Lorentz}). } This motion can be projected directly onto the relative configuration space where it is described by the Lorentz-like force and mass-like quadratic potential terms mentioned above.

The construction here is intimately linked to that of \cite{LittlejohnReinsch}, where equivalent terms were obtained when working within particular choices of rotational frame. The full principal fibre bundle picture presented here has certain advantages, the most important of which is that it does not refer to frames at all. This is particularly useful for eliminating the awkward frame effects that are seen in attempts \citep{marsden1992lectures} to symplectically reduce the rotations. Our formalism can also be applied to any semi-simple Lie group, and can thus be directly compared to the decompositions used in Kaluza--Klein theory to which we will return shortly. In the special case where the angular momentum is equal to zero, the motion along the fibres vanishes and is thus not required to describe the reduced theory.

While this picture matches the strong relationalism advocated in \cite{Barbour_Bertotti}, there are notable differences between that work and ours. Their framework cannot be applied to isolated subsystems of the universe because such subsystems can have non-zero conserved angular momentum.  The theory of \cite{Barbour_Bertotti} therefore applies only in a cosmological context, and cannot satisfy the concept of \emph{subsystem-recursivity} advocated in \cite{pittphilsci16623} according to which ``interpretative conclusions about a sector of a theory can be deduced from considering subsystems of other models of the same theory.'' Because our theory can accommodate non-zero angular momentum, it has the advantage of being equally applicable to subsystems as to the universe as a whole, and can be made to satisfy subsystem-recursivity (see the end of Section~\S\ref{sec:LR_intro}).\footnote{Gluing back reduced systems however comes with its own complications (cf. \cite{RovelliGauge2013, GomesStudies} and \cite[Sec. 3]{DES_gf}). } Thus, our theory is more flexible and fits a broader class of contexts for discussing symmetry\footnote{ Including many of the contexts discussed in \cite{pittphilsci16622}.} regardless of the status of the angular momentum of the universe.

Our construction further allows us to pinpoint the precise technical difference between the mathematical representation of the dynamics of rotation and translation. This difference lies in the nature of the gauge-connection: it is necessarily curved for rotations but flat for translations. This difference leads to many simplifications in the translational case.

To better understand the flexibility of our enriched ontology, it is helpful to consider the similarities and differences between our treatment of rotations and the treatment of gauge symmetries in Kaluza--Klein theory. Standard Kaluza--Klein theory makes use of the kind of enriched ontology introduced in this paper by positing particles that move in a fixed, symmetric and extra-dimensional background geometry. The explanatory role played by this ontology in Kaluza--Klein theory is, however, very different to the rotational case.

In Kaluza--Klein theory, the gauge orbits take on more ontological significance than in Yang--Mills gauge theories because they are interpreted as extra space-time dimensions. But the orbits do not have the full representational capacity of ordinary directions in space-time because motion along these directions is rigidly constrained to accommodate a projection of the motion onto the base space. Some argument must then be supplied to specify the condition for projectability.

The justification in Kaluza--Klein theory for the projection is epistemic: the projected directions are too small for motion along them to be empirically accessible. The phenomena that we do have epistemic access to are then sufficiently well-described by the projection of the constrained motion in the extra dimensions to the familiar $(3+1)$-dimensional motion on the base space, where it is interpreted as motion in space-time in the presence of background fields. The constraints on the motion translate to conserved charges of the corresponding symmetry group. Because of the epistemic interpretation of the projection, the extra dimensions play a prominent role in the possible empirical implications of the theory.

This can be compared to the epistemic status of the orbits of spatial rotations in Newtonian mechanics. In this case, rotated and translated representations of the universe are empirically indiscernible even in principle.\footnote{ Here we are referring to `kinematical shifts' of all the particles in the system and not the more cryptically defined `dynamical shifts' that, in this case, would involve changing the angular momentum of the system. } The argument for the projection can then be made in terms of the Principle of the Identity of Indiscernibles. For translations this can be done without introducing an enriched ontology following, for example, the arguments in \cite{saunders2013rethinking} or the explicit reduction performed in \cite{LittlejohnReinsch} that is summarised in Section~\S\ref{sub:intro LR}. But for rotations, a simple construction is not possible and an enriched ontology is necessary. As we will see, there is a close formal analogy between this and the Kaluza--Klein case, where the relevant conserved charge for the rotation group is the angular momentum of the universe. The differences between the way the projection is interpreted --- in the rotational case as an ontological identification and in the Kaluza--Klein case as an epistemic constraint --- thus illustrates the flexibility and generality of the new kind of enriched ontology introduced here.

In this paper we will focus mainly on the formal representations of our proposed enriched relational ontologies. Questions surrounding the interpretation of the projection to base space, the ontological and explanatory implications of rigid motion, and the comparison with Kaluza--Klein theory will be taken up in our second paper. The arguments of this paper will therefore necessarily be rather technical in nature. That's unsurprising. After all, the main advantage that this construction has over the many previous failed attempts to produce a relational theory of rotation is that it makes use of the modern mathematical advances in gauge theory. But we hope these technicalities will not be seen as impediments to conceptual clarity. Our mathematical labours will bear significant conceptual fruit: a compact description of a relational theory of classical (i.e., non-relativistic) rotation in terms of an enticing new proposal for formulating the ontology of such a theory.

\subsection{Prospectus}

In Section~\S\ref{sec:the_angular_momentum_of_the_universe}, we build intuition for our theory by comparing it to existing theories in the literature. For convenience, the technical results of the paper are summarised in Section~\S\ref{sec:tech summary}. The principal fibre bundle formalism is then introduced in Section~\S\ref{sec:PFB_intro}, where we use it to define the horizontal projection of curves in a fibre bundle in terms of the basic structures required for our construction: parallel transport (via a connection-form) and curvature. We then develop our theory in Section~\S\ref{sec:KK_st}, Section~\S\ref{sec:KK_config} and Section~\S\ref{sec:LR}.  In Section~\S\ref{sec:KK_st} we review the standard formulation of Kaluza--Klein theory for semi-simple Lie groups over space-time. Given a background connection-form, a metric is defined on the bundle in Section~\S\ref{sub:bundle metric} that is natural and respects the bundle symmetries. Using this metric, it is possible to derive the standard Lorentz force as the difference between the acceleration along geodesics in space-time and the projection to space-time of geodesics in a higher-dimensional bundle. The decomposition used in Section~\S\ref{sec:KK_st} then provides the tools to build a formal analogy to the rotational case. In Section~\S\ref{sec:KK_config} we switch gears and develop a principal fibre bundle formalism for configuration space. We turn around the logic of Kaluza--Klein theory and use the kinematical metric and its symmetries to uniquely select a connection-form (Section~\S\ref{sec:config_varpi}) and a bundle curvature form (Section~\S\ref{sec:curvature}) on configuration space. Finally, in Section~\S\ref{sec:LR} we apply the general construction of Section~\S\ref{sec:KK_config} to the case of translations and rotations in a Newtonian $N$-particle system. In Section~\S\ref{sec:LR_intro}, the connection and curvature forms for this case are explicitly constructed. This leads to the main result (Equation~\ref{config_Lorentz}), derived in Section~\S\ref{sec:KK_proj}, that gives a representation of the dynamics on the rotation-free reduced space in terms of a Lorentz-like force and a mass-like quadratic potential. Section~\S\ref{sec:conclusions} concludes. The appendices provide more detailed proofs of some of the technical requirements of Sections~\S\ref{sec:PFB_intro} and \S\ref{sec:KK_st}.

\section{The angular momentum of the universe} 
\label{sec:the_angular_momentum_of_the_universe}

\subsection{The (non-)issue with rotations} 
\label{sub:rot non issue}

Let us first consider a universe with zero angular momentum. In this case, there is general agreement \citep{belot1999rehabilitating,pooley2002relationalism} that a relational theory of rotation can be broadly constructed along the lines suggested by \cite{Barbour_Bertotti}. Indeed, \cite{pooley2002relationalism} consider vanishing angular momentum to be a real-world prediction of a fully relational theory. Zero angular momentum is further claimed \citep{belot1999rehabilitating} to be a ``contingent fact'' supported by the best available evidence from cosmology (e.g., due to the observed isotropy of the CMB).

Such claims are, in our opinion, implausible since they inappropriately extend the analogies between Newtonian and general relativistic space-times. In general relativity there are inherent difficulties in defining the angular momentum of the universe. In a spatially closed universe, there is not even any known way to define global angular momentum. While definitions of angular momentum are available for open universes, these definitions rely on fixed asymptotic boundary conditions, and it is not clear how to construct covariant boundary conditions that would simultaneously: allow for universal angular momentum, be compatible with cosmological observations, and agree with relationalist intuitions. There is therefore no rigorous sense in which vanishing angular momentum has been established as a contingent fact of our universe; and thus there is no good reason to reject the possibility of universal angular momentum --- particularly in a Newtonian setting.

In the presence of angular momentum, all parties to the absolute-relational debate agree that the evolution of a system of point particles is underdetermined by their initial relative positions and momenta. One extra vectorial datum, expressible as the value of the total angular momentum, is required to determine the evolution of the system. The need for such an extra vectorial datum was recognised long ago by \citet{Tait_shape} and \citet{Lange_shape} in the context of the initial value problem for force-free Newtonian point particles in three dimensions and was later emphasised by \cite{Poincare_sci_hypo} and eventually by Barbour (e.g., \citeyear{Barbour_Mach,Barbour_Bertotti}). Such a requirement can plausibly be taken to motivate accounts of angular momentum in terms of Maxwell--Huygens space-time (as in for example \cite{Earman1989,saunders2013rethinking,knox2014newtonian}), which retains a fixed standard of rotation; or in terms of neo-Newtonian space-time, which further retains a fixed inertial structure.  Alternatively, the extra data can be taken to be primitive relational accelerations along the lines of \citet[Section~\S III.F.2]{sklar1977space} or inertial frames along the lines of \citet[Ch.4, Section~\S1]{van1970introduction}. The account we will present here, however, illustrates that the retention of a standard of rotation or the positing of primitive accelerations or inertial frames is not necessary in order to allow for non-zero angular momentum.

Upon the symplectic reduction presented in \cite{marsden1992lectures}, the angular momentum of the original system decomposes into a constant magnitude and two time-dependent orientations. The orientations represent the time varying directions of the relational angular momentum in a frame co-rotating with the system \citep[Section~\S2.3.4]{butterfield2006symplectic}. These directions are said to ``have no transparent geometric interpretation for relationalists'' \citep{belot1999rehabilitating};  although their evolution can be determined autonomously in terms of initial data on relative configuration space and a specification of the value of the conserved angular momentum.

The difficulty of finding a co-rotating frame where the angular momentum is conserved is related to a more general problem of working with frames that only make use only of instantaneous configurations and no further dynamical information. The lesson from the Scholium to Newton's Principia and from the Lagrangian-based approach of \cite{Barbour_Bertotti} is that dynamical considerations are necessary to determine the privileged frames in which angular momentum is conserved. This information cannot be extracted from the instantaneous configurations alone but requires knowledge of the theory's laws.

These facts are nicely encoded in the fibre bundle formalism. For rotations there is in general  no simple function of the \emph{instantaneous} relative configurations alone that can be used to define frames where angular momentum is conserved because, as we will see, the rotational bundle is curved.\footnote{ Formally, this is because curvature prohibits the existence of a horizontal section on the fibre bundle. } This fact is exemplified, both in \cite{LittlejohnReinsch} and in the symplectic reduction formalism of \cite{marsden1992lectures}, by the non-conservation of angular momentum in the frames used in those papers. However, using the principal fibre bundle formalism, it is always possible to find an \emph{anholonomic frame} associated with the \emph{extremal curves} (i.e., the dynamically possible models of the theory) on the bundle where the angular momentum is indeed conserved. To determine such frames one requires knowledge of the theory's laws, which determine the extremal curves, and the value of the $\SO(3)$ charge. Such a construction gives a dynamical definition of inertial frames that is along the lines of \cite{Barbour_Bertotti} but also works in the presence of non-zero angular momentum.

The most interesting aspect of our new theory, therefore, is its ability to represent angular momentum directly on relative configuration space as a charge no different from that of a classical electrically charged particle moving in a background electromagnetic field. From a modern perspective, the motion of a charged particle has two independent interpretations: one in which a charge needs to be posited and another where it does not. In the former standard picture, the dynamical evolution of the particle in space-time is only determined after adding one datum: the charge of the particle which is to be acted on by the background field. In the latter picture, a Kaluza--Klein interpretation of the same charged-particle dynamics eschews this extra datum. In this picture's enriched ontology, forced motion is entirely geometrised as the observable ``shadow'' of \emph{free} motion in a space-time with one additional dimension. Here the charge datum is re-described as a conserved momentum along the extra dimension. In this analogy, the account in terms of a Maxwell--Huygens space-time is analogous to a Kaluza--Klein interpretation where the extra dimension is taken to be real in light of its explanatory role in accounting for charged motion. While such a realist view of the extra dimension is always available, it is clearly not necessary and is even non-standard. The relative merits of an absolutist versus an enriched relationalist account will be analysed in more detail in our second paper.

\subsection{The two-body system} 
\label{sub:the_two_body_system}

In this section we attempt to build intuition for how an extra vectorial datum can, on the one hand, be interpreted as a dynamical degree of freedom due to angular momentum and, on the other hand, be interpreted in relational terms as a charged coupling constant. To do this, we consider the simple example of a two-body system rotating under some potential. Because this system can be confined to a two dimensional plane, a projection of the rotational motion will leave a one-dimensional system in which curvature must be equal to zero. Vanishing curvature implies that the Lorentz-like force term of \eqref{config_Lorentz}  that we will derive in Section~\S\ref{sec:KK_proj} is zero. Nevertheless, the mass-like quadratic potential is non-zero, and so this example will serve to illustrate this effect in an intuitive setting.

To define the model, we use a coordinate system in the plane of rotation and set the origin of the coordinate system to be the location of one of the particles. If $r$ is the distance between the particles and $\theta$ is an angle parametrising the orientation of the system in absolute space, then the Lagrangian for the system is
\begin{equation}
  \mathcal L = \frac m 2 \left( \dot r^2 + r^2 \dot \theta^2 \right) - V(r)\,,
\end{equation}
where $m$ is the mass of the second particle.\footnote{The kinetic energy of the first particle is zero in these coordinates.} Because $\theta$ is a cyclic variable, it can be eliminated via a Routhian reduction by integrating out its Euler--Lagrange equations. This introduces the constant of motion $L = m r^2 \dot \theta$, which is the angular momentum of the coplanar system. In terms of this constant, the reduced theory can be shown to have the reduced Lagrangian
\begin{equation}
  \mathcal L_\text{red} = \frac{m \dot r^2}{2} - \frac{L^2}{2 m r^2} - V(r)\,.
\end{equation}
Removing the angular variable $\theta$ thus introduces the well-known centrifugal potential $V_\text{eff} = \frac{L^2}{2 m r^2}$. This potential is the two-dimensional analogue of the mass-like term of our main result \eqref{config_Lorentz}, which is quadratic in the angular momentum $L$.

The reduced theory is empirically equivalent to the original theory defined in absolute space. Because it is expressible entirely in terms of $r$, no standard of rotation need be introduced. The theory is also explanatory: all rotational effects are accounted for by a central repulsive $1/r^3$ force with a coupling that can be determined empirically. Any apparent underdetermination in the system can be resolved by specifying the value of this coupling in a manner no different to the specification of Newton's constant $G$ in a gravitational system.

The simplicity of this two dimensional example, however, masks the complexities of the three dimensional theory. In three dimensions, the curvature of the rotational bundle is no longer zero, and an understanding of these effects requires the tools of modern gauge theory. We now turn attention to these.

\subsection{The rotational bundle}\label{sub:intro LR}

In their groundbreaking work, \citet{LittlejohnReinsch} carry out the explicit elimination of those degrees of freedom that can be eliminated on the basis of translational and rotational invariance by working with particular standards of rotation; i.e.,  particular choices of frame. The bundle construction for the translations is largely uninteresting and can be rather explicitly reduced. After doing this, they construct a rotational bundle over relative configuration space. They relate the properties of the gauge-potential (i.e. the connection-form in a particular choice of frame) and of the gauge curvature  to the classical dynamics of point particles. These gauge fields have a simple physical interpretation, which can be understood in terms of elementary ideas about conservation of angular momentum, and the rotations generated by deformable bodies with changing moments of inertia.\footnote{This was the approach taken by \cite{guichardet1984rotation}, who dealt with the kinematics of deformable bodies such as molecules and falling cats. The same discovery was made independently by Shapere and Wilczek (\citeyear{wilczek1989a,shapere1989b}), evidently as a by-product of their more substantial work on the gauge theory of the locomotion of objects such as microorganisms in a viscosity-dominated medium \citep{shapere1987self,shapere1989c,shapere1989d}.} \citet[p.215]{LittlejohnReinsch} write: ``What is particularly remarkable about these developments is the manner in which the entire structure of nontrivial connections on non-Abelian fibre bundles emerges from elementary mechanical considerations.''

The theory that emerges is not only mathematically elegant, but has also been shown to be immensely useful in the study of molecular dynamics, where one is mostly interested in the properties of the reduced dynamics. At the time of writing of \citep{LittlejohnReinsch}, the notion that the internal dynamics of $N$-body systems is a gauge theory was a new one in the literature of applied physics, chemistry, and engineering. But it has since been proven to have wide-ranging implications for the understanding of such systems.

Most importantly for us, the work of \citet{LittlejohnReinsch} unifies the treatment of rotations with that of other symmetries in the Newtonian framework. In so doing, it clarifies the distinctions between the previous historical treatments of the translation group and those of the rotation group. The most significant way in which the current work differs from the formulation of \citet{LittlejohnReinsch} is that they specialise to specific choices of rotational frames. In our formalism, which recasts the reduction in the language of principal fibre bundles, such choices are never necessary. The geometric picture we obtain is both technically slender and conceptually insightful since it emphasises the construction is independent of any choice of rotational standard. Moreover, it enables us to articulate the formal analogy to electromagnetism and Yang--Mills theory using the Kaluza--Klein formalism. These differences will play a central role in the explanatory considerations of the second paper. For a more technically detailed commentary on the differences between these approaches, see Section~\S\ref{sec:LR}.


\section{Summary of the technical results of this work} 
\label{sec:tech summary}

In this section, we summarise for convenience the technical results of the theory developed in Section~\S\ref{sec:KK_st}, Section~\S\ref{sec:KK_config} and Section~\S\ref{sec:LR}. To begin, consider that the standard Lagrangian for a conservative central force system is invariant only under time-\textit{independent} changes of orientation. One of the primary technical achievements of this paper is to show that the gauge formalism is powerful enough to extend this symmetry and thereby project the dynamics, given one added choice of constant vector, to a reduced configuration space of inter-particle relations conforming to the intuitions reported in Section~\S\ref{sec:the_angular_momentum_of_the_universe}.
 
Consider the case where the potential energy function is velocity-independent and the kinetic and the potential terms are each invariant under some time-independent transformation. Moreover, the kinetic term is given by the norm squared of the velocity, and this norm is induced by a symmetry-invariant inner product, which we will call the \emph{kinematical metric}. Using these properties of the kinematical metric and the known symmetry orbit induced by the action of the gauge group on configuration space, it is possible  to treat the configuration space as a principal fibre bundle for this symmetry group and to construct a dynamical connection 1-form on this bundle known as an \emph{Ehresmann connection}.\footnote{This is formally similar to the best-matching construction of Barbour (cf. \citet{Barbour_Bertotti} and \citet[Ch. II.5]{Flavio_tutorial} and references therein). See  also section \ref{sec:LR_intro}. }

Using the notion of orthogonality implied by the kinematical metric, we obtain a kinetic term in the Lagrangian that decomposes into one term that admits only velocities orthogonal to the orbit and another that admits only velocities parallel to the orbit. The orthogonal part, by the properties of an Ehresmann connection, is fully invariant. The lack of invariance of the theory is therefore entirely contained in the contribution parallel to the orbit. Because the orbits are by assumption Killing directions of the kinematical metric, the parallel velocities will be conserved along dynamical trajectories (see Theorem~\ref{ftnt:Killing} of Section~\S\ref{sub:gen Lorentz} below). In the case of rotations, such velocities map to the angular momentum of the system: the extra vectorial datum, discussed in Section~\S\ref{sub:rot non issue}, needed to define a complete projection of the theory to the reduced space.

A second technical achievement of this paper requires us to review the Kaluza--Klein construction for non-Abelian Yang-Mills theories. A comparison with that framework shows precisely how inertial motion on the full bundle, when projected to the base space, differs from inertial motion on the base space. Most interestingly, the new structures admit a compelling geometric interpretation in terms of a Lorentz-like force term with the Ehresmann connection generating the relevant curvature contribution. Using this approach, we will be able to offer a geometric interpretation and a classification of the symmetry reductions of different systems. In particular, our interpretation classifies the differences between quotienting with respect to: (i) translations, (ii) rotations, and more generally (iii)  any semi-simple Lie group.
 

One final point that our formalism will help to clarify is the representational difference between rotation and translation. As we will see, reduction by rotations is more involved than reduction by translations. It is also slightly more involved than the reduction process in the standard Kaluza--Klein formulation of electromagnetism.

Translations, it turns out, are remarkably simple to represent owing to \textit{three} independent mathematical facts. The first two facts are that the group action of translations on configuration space and the relevant kinematical metric along the orbits are both independent of the configuration itself. The third relevant fact is that the bundle curvature vanishes for translations. In the Kaluza--Klein construction for electromagnetism, the group action and the kinematical metric along the group orbit are both configuration-independent but the bundle curvature is non-zero. In the Kaluza--Klein construction for general semi-simple Lie-groups, only the kinematical metric along the orbit is configuration-independent: the curvature of the relevant bundle is non-zero \emph{and} the action of the group is not configuration-independent. And in the case of rotations, \emph{none} of these three conditions are satisfied. In particular, the dependence on configurations of the kinematical metric along the orbits of the configuration space induces the additional mass-like quadratic potential term of main result, equation~\eqref{config_Lorentz} of Section~\S\ref{sec:KK_proj}, in the reduced description in addition to the one due to the Lorentz-like force.

\section{A brief introduction to fibre bundles}\label{sec:PFB_intro}

The modern mathematical formalism of gauge theories relies on the theory of principal  (and associated) fibre bundles. We will not give a comprehensive account here (e.g. \citet{kobayashivol1}), but only introduce the necessary ideas and objects. In Section~\S\ref{sub:PFB}, after giving the definition of a principal fibre bundle, we derive a fully gauge-covariant formula, \eqref{eq:hor_proj_v}, for the horizontal projection of curves in such a bundle. This projection defines a notion of parallel transport in terms of a connection-form and its curvature \eqref{eq:curv_h2}, which we will make extensive use of in subsequent sections.

A simple example of a principal fibre bundle is as follows. Given an $n$-dimensional manifold $M$, thought of as representing space-time (though we will not explicitly need any non-trivial metric structure of space-time), the space of linear frames over $M$ is a principal fibre bundle with structure group $GL(n)$ taking $M$ as the base space. Each element of the ``fibre'' over each point of the base space $M$ consists of a linear frame, i.e. a basis, of the tangent space to $M$  at that point. In this example, it is clear that there is no ``zero'' or identity element on each fibre. But there is a one-to-one map between the group $GL(n)$ and the fibre: we can use the group to go from any frame over that point to any other.  

The main idea underlying the physical significance of the internal space in a fibre bundle is  perhaps best  summarized in the groundbreaking original paper by  \citep{YangMills}. They write: 
\begin{quote} The conservation of isotopic spin is identical with the requirement of invariance of all interactions under
isotopic spin rotation. This means that when electromagnetic interactions can be neglected, as we shall hereafter assume to be the case, the orientation of the
isotopic spin is of no physical significance. The differentiation between a neutron and a proton is then a
purely arbitrary process. \end{quote}

The limitations on how to identify ``a proton'' at two different points of space-time are imposed by a connection-form: which is another structure on the bundle. 
That is, a connection-form $\omega$ allows us to define which points of neighbouring fibres are taken as equivalent to an arbitrary starting-off point in an initial fibre. 
In the example of linear frames, it gives us a notion of  ``parallel transport'' of the basis as we go from an initial choice over one point of $M$, to a neighbouring fibre. Curvature then acquires meaning as non-holonomicity. That is, starting from a given element of a given fibre and following such an identification of frames along different  paths in the base space $M$, and arriving back at  the same fibre, the points at which one arrives---i.e. the final  elements  on the bundle obtained by this process---may still differ by a gauge transformation, i.e. a transformation along the fibers like an element of $GL(n)$ in the frame bundle. In most physical applications of principal fiber bundles, it is this disagreement that carries physical consequences. In the Yang and Mills quote, if you and I started with  protons at point $x$, and you stayed put with yours while I parallel transport mine  around a loop within spacetime $M$, I might come back with what you would call a `neutron'. 

 There are two consistency conditions that  a connection-form must satisfy in order to provide such a standard of identity. First, the parallel transport to a neighbouring fibre should commute with the group action; i.e. there is a sense in which it does not really depend on what we choose as the starting point.  Equivalently, there is an equivariance property that $\omega$ must satisfy. Secondly, there must be exactly one choice of parallel transported frame per direction of $M$. All the relevant properties of gauge transformations can be derived from these two. 
 
 We are now going to formalize this intuitive description.

\subsection{Principal fibre bundles}\label{sub:PFB}

A principal fibre bundle is a smooth manifold $P$ that admits a smooth action of a  (path-connected, semisimple) Lie group, $G$; i.e.,  $G\times P\rightarrow P$ with $(g,p)\mapsto g\cdot p$ for some action $\cdot$ and such that for each $p\in{P}$, the isotropy group is the identity (i.e., $G_p:=\{g\in{G} ~|~ g\cdot p=p\}=\{e\}$). 
Naturally, we construct a projection  $\pi:P\rightarrow{M}$, from $P$ to the set $M$ of equivalence classes given by $p\sim{q}\Leftrightarrow{p=g\cdot{q}}$ for some $g\in{G}$. So the base space $M$ is the orbit space of $P$, $M=P/G$, with the quotient topology; i.e., characterized by an open and continuous $\pi$. By definition, $G$ acts transitively on each fibre. 

Locally over $M$,  it must be possible to choose a smooth embedding of the group identity into  the fibres. That is, for $U\subset M$, there is a smooth map $\sigma: U\rightarrow P$ such that $P$ is locally of the form $U\times G$, $U\subset {M}$; i.e., there is a diffeomorphism $U\times G\to \pi^{-1}(U)$ given by $(x, g)\mapsto g\cdot \sigma(x)$.\footnote{Given $p$, the inverse map is a bit more complicated because we must  find $g'$ such that $g'\cdot p=\sigma(x)$, for some $x$. It will depend on the form of $\sigma$. }  The maps $\sigma$ are called {\it local sections} of $P$.

On $P$, we consider an Ehresmann connection $\omega$, which is a 1-form on $P$, valued in the Lie algebra $\mathfrak{g}$ and satisfying appropriate compatibility properties with respect to the fibre structure and the group action of $G$ on $P$. The Ehresmann connection is the basic object defining a generalized version of parallel transport; i.e., horizontal projection. 

Given the Lie-algebra $\mathfrak{g}$, we define the vertical space $\V_p$ at a point $p\in P$, as the linear span of vectors of the form $\iota_{p}(\xi)$ for $\xi\in \fg$ and $\iota_p:\fg\ra T_pP$ defined as:
\be\label{eq:iota}\iota_{p}(\xi):=\frac{d}{dt}{}_{|t=0}(\exp(t\xi)\cdot p)\,.\ee  Thus $\iota_p(\xi)$ is the tangent vector of the curve through $p$ generated by $\xi$. After defining vertical spaces,  $\iota_p$ is then seen as a linear operator $\fg\ra \V_p$. And then the conditions on $\omega$ are:
\be\label{eq:connection_prop}
\omega(\iota(\xi))=\xi
\qquad\text{and}\qquad
g^*\omega=g^{-1}\omega g=\Ad_g\omega,
\ee
where, for a vector $v\in T_pP$, the pull-back is defined by $g^*\omega_p(v):=\omega_{g\cdot p}(g_* v)$. Furthermove, $g_*$ is the push-forward of the tangent space by, or the linearization of, the map $g\cdot:P\rightarrow P: p\mapsto g\cdot p$, and $\Ad:G\times\mathfrak{g}\rightarrow \mathfrak{g}: (g, \xi)\mapsto g^{-1}\xi g$. 
  
A choice of connection is equivalent to a choice of covariant `horizontal' complement to the vertical space; i.e., $\HH_p\oplus \V_p=T_pP$, with $\HH$ compatible with the group action: $g_*\HH_p=\HH_{g\cdot p}$. Given a direct sum decomposition, the connection-form is:
  \be\label{eq:connection} \omega(\cdot)=\iota^{-1}(\hat V(\cdot))
  \ee
where $\hat V$ is the linear projection  onto the vertical spaces, $\hat V:T_pP\rightarrow \V_p$ and ker($\omega)=\HH_p$. It is easy to see that \eqref{eq:connection} satisfies \eqref{eq:connection_prop} (cf. appendix \ref{app:connection}).

A connection therefore allows us to locally define ``horizontal complements'' to the fibres in $P$. Through such complements one can horizontally lift paths $\gamma$ lying  in $M$ to $P$. These horizontally lifted paths are commonly referred to as ``parallel transports" in $P$ along $\gamma$ with respect to (horizontality as defined by) $\omega$.
   As in the example above of linear frames, when you go around a closed curve  in $M$, parallel transport upstairs in $P$ may land you at a different point on the initial fibre from where you started; e.g., assuming you started from $p$, you may end at $p'=g\cdot p$.   The relation between $p$ and $p'$ (i.e.,  $g$) is called the holonomy of $\omega$ along the closed path $\gamma$. Its infinitesimal analogue is the curvature of $\omega$, 
   \be \Omega=\d_{\text{\tiny{P}}} \omega+\omega\wedge_{\text{\tiny{P}}} \omega\,,\label{eq:curv_h1}
   \ee
   where $\d_{\text{\tiny{P}}}$ is the exterior derivative on the smooth manifold $P$, and $\wedge_{\text{\tiny{P}}}$ is the exterior product on $\Lambda(P)$ (it gives anti-symmetrized tensor products of differential forms).  We can also write the curvature 2-form as (cf. \citet{kobayashivol1}): 
   \be\label{eq:curv_h}
  \Omega=\d _{\text{\tiny{P}}}\omega\circ\hat{H} 
   \ee
   or even as:
   \be
   \Omega(\cdot, \cdot)=\iota^{-1}\hat{V}(\lbr\hat{H}(\cdot), \hat{H}(\cdot)\rbr)\label{eq:curv_h2}\,,\ee
   where the double-square bracket is the commutator of vector fields in $P$. The domain of the differential calculus will be mostly clear from context, and we will therefore only reinsert the subscript, $P$ or $M$ when necessary.

Most important for us will be the covariance properties of the horizontal projection. To get to these, we start by using the two properties \eqref{eq:connection_prop} to show that for a 1-parameter group of transformations $g_t=g_0 \exp(t\xi)$, and for a curve $\gamma_t: I\rightarrow P$, 
 we have: 
$$\varpi(\frac{\d (g_t\cdot \gamma_t)}{\d t}_{t=0})=\varpi(g_0{}_*\left(\frac{\d (\exp(t\xi)\gamma_0)}{\d t}+\dot\gamma\right))= \Ad_{g_0}(\varpi(\dot \gamma)+\varpi(\iota(\xi)))=\Ad_{g_0}(\varpi(\dot\gamma)+\xi),$$
and therefore, defining a horizontal projection $\hat{H}:=\bb 1 -\hat{V}$, we have, from \eqref{eq:connection_prop} (and by the properties of the $\iota$ map, which imply $\iota(\Ad_{g_0}\xi)=g_0{}_*\iota(\xi) $, see \eqref{eq:iota_equi}):
  \begin{eqnarray}
  \hat H_{g_0\cdot p}(\frac{\d (g_t\cdot \gamma_t)}{\d t}_{|t=0})=g_0{}_*\left(\iota(\xi)+\dot\gamma\right)-\iota(\Ad_{g_0}(\varpi(\dot\gamma)+\xi)),\nonumber\\
  =g_0{}_{*}(\dot\gamma-\iota(\varpi(\dot\gamma)))=g_0{}_{*}\hat \HH_p(\dot \gamma)=\hat H_{g_0\cdot p}(g_0{}_{*}(\dot\gamma))\,.\label{eq:hor_proj_v}\end{eqnarray}
  This shows that the horizontal projection of the velocities is fully  gauge-covariant; i.e., even under time-dependent gauge-transformations the time derivative of the transformation, $\xi$, does not appear in the final result. The gauge-covariance of this projection will play an integral role in the analysis to follow. It will be the gauge-connection that will determine a standard of rotation along the curve; a standard that exists whether the angular momentum, or equivalently the vertical motion, vanishes or not.

\section{Kaluza--Klein for space-time}\label{sec:KK_st}

By 1919, Maxwell's electromagnetic theory was well established. Einstein had only
 recently formulated general relativity and it was natural to seek an account of electromagnetism  within the 
theoretical framework of his new theory; that is, through the geometry of space-time. Theodor Kaluza
 achieved this unification by postulating an extra dimension.
Although attractive, Kaluza's idea had two serious defects: the dependence on the fifth
coordinate was suppressed for no apparent reason, and a fifth dimension had never
been observed. These two criticisms were addressed by Oscar Klein, who postulated
a circular topology and a rigid, homogeneous geometry for the fifth dimension. He showed that if the radius was small enough it was possible to keep the dependence on the fifth coordinate, and still justify unobservability and preserve Kaluza's results.

Our present understanding of gauge theory is fully field-theoretic, unlike the particle-field picture used here. In the fully field-theoretic version, there is no particle trajectory, and the field content of the theory, including its Lagrangian, is fully gauge-invariant.\footnote{Discussions on other uses of Kaluza--Klein theory and reduction, e.g. in string theory, go well beyond the scope of this paper.} Nonetheless,  the particle-field idealisation is sufficient for establishing the formal analogy we make use of in this paper.

In the remainder of this section, we review the basic mathematical constructions of standard Kaluza--Klein theory. We begin in Section~\S\ref{sub:bundle metric} by defining the relevant fibre bundle and equipping it with a metric structure that respects the bundle symmetries. In Section~\S\ref{sec:KK_curv}, we compute and compare two importantly distinct notions of curvature: the bundle curvature defined in \S\ref{sec:PFB_intro} using \eqref{eq:curv_h} and the Levi--Civita connection of the metric we define in Section~\S\ref{sub:bundle metric}. The relationship between these two quantities is absolutely crucial to the Kaluza--Klein analysis because it allows us to compute the difference between the projection of geodesic curves on the bundle and geodesics on the base space. This difference, as illustrated by equation~\ref{eq:KK_Lorentz}, is precisely encoded in the usual Lorentz-force term of electromagnetism.

\subsection{The bundle metric} \label{sub:bundle metric}

Throughout this section, $P$ will be a $G$-bundle over the smooth (pseudo)riemannian $m$-dimensional manifold $M$.  $\{X_i\}_{i=1}^m$ will be an orthonormal reference frame over the open set $U\subset M$; i.e., each $X$ is a vector field in $U$ such that, at each $x\in M$, $\{X_i(x)\}_{i=1}^m$ is an orthonormal basis of $T_xM$  and $\{\lambda^i\}_{i=1}^m$ is the associated co-frame such that $\lambda^i*X_j=\delta^i_j$.

$G$ is a $k$-dimensional Lie group, with $\fg $ its Lie-algebra, endowed with an $\Ad$-invariant inner product $K$.\footnote{This inner product always exist for semi-simple Lie groups, usually it is just written as the trace, more generally, it is known as the \textit{Killing form}.  }  This inner product induces an inner product on the vertical spaces via: 
\be\label{kil} \langle \iota_p(\xi), \iota_p(\xi')\rangle_p:= K(\xi, \xi')\,.
\ee
Due to the covariance properties of the $\iota$ map (see \eqref{eq:iota_equi}), this inner product is $G$-invariant:
\be\label{eq:K_inv} \langle \iota_{g\cdot p}(\xi), \iota_{g\cdot p}(\xi')\rangle_{g\cdot p}= K(\Ad_g\xi, \Ad_g\xi')=K(\xi, \xi')= \langle \iota_p(\xi), \iota_p(\xi')\rangle_p
.\ee

Writing the dimension of $P$ as $n$, the Lie-algebra basis is given by $\{\tilde{e}_\sigma\}_{\sigma=m+1}^n$, where dim$(\fg)=k=n-m$, and the dual basis, for $\fg^*$, is $\{e^\sigma\}_{\sigma=m+1}^n$. In general, Roman indices will vary from $1$ to $m$, and Greek indices from $m+1$ to $n$. We will also denote  ${C^\gamma_\sigma}_\beta$ as the structure constants of $\fg$:
$$[\tilde{e}_\sigma,\tilde{e}_\beta]={C^\gamma_\sigma}_\beta \tilde{e}_\gamma\,,$$
where  ${C^\gamma_\sigma}_\beta=-{C^\gamma_\beta}_\sigma $ and  ${C^\gamma_\sigma}_\beta =-{C^\beta_\sigma}_\gamma$. 

Finally, given a metric $h$ in $M$ and a connection $\omega$ --- or the equivalent choice of horizontal space $\HH$ --- we obtain a unique $G$-invariant metric, $\bb G$, on $P$ such that: the decomposition $TP=\HH\oplus\V$ is orthogonal, $\pi_*{}|_{\HH}$ is an isometry, and the vertical inner product is induced by $K$; i.e.,
\be\label{eq:bundle_metric} \bb G:=\pi^*h+K\circ\omega\,. \ee
For $u,v\in\V_p$, we have:
 \be\bb G_p(u,v)=K(\iota_p^{-1}(u),\iota_p^{-1}(v))=\langle u, v\rangle_p\label{eq:vert_ip}\ee
because $\pi_p{}_*(v)=\pi_p{}_*(u)=0$. For $u,v\in\HH_p$ we have
\be\bb G_p(u,v)=h_{\pi(p)}(\pi_p{}_*(u),\pi_p{}_*(v))\ee 
since $\omega_p(u)=\omega_p(v)=0$, and, finally, if  $u\in\HH_p$ and $v\in\V_p$ then $\bb G_p(u,v)=0$ since $\omega_p(u)=0$ and $\pi_p{}_*(u)=0$. Because both $\pi^*h$ and $K\circ\omega$ are $G$-invariant (cf. appendix \ref{app:connection}), $\bb G$ is $G$-invariant. In other words: vertical directions are Killing directions for $\bb G$.  

For ease of notation and using $\{X_i(x)\}_{i=1}^m$ and $\{\lambda^i\}_{i=1}^m$ the set of dual orthonormal frames and co-frames defined above, we let $e^i$ be some  1-form over $TP$ such that $e^i(u)=\lambda^i(\pi_*(u))$. We can then find a basis of vertical one-forms by $\tilde e^\beta\circ\omega=e^\beta$ and obtain a 1-form in $P$. 
Allowing capital Roman indices to run over all types of indices, we then have the co-reference frame $\{e^A\}_{A=1}^n$ on $P$. We can use the metric $\bb G$ to dualize the frame and obtain $\{e_A\}_{A=1}^m$. Using the metric in this way, one can see that  $\pi_* (e_i)=X_i$ and, moreover, that $\{e^A\}_{A=1}^n$ becomes an orthogonal  co-frame that splits horizontal and vertical spaces orthogonally. That is, 
\be e^\sigma(e_i)=e^i(e_\sigma)=\lambda^i(\pi_*(e_\sigma))=0=\bb G(e_i,e_\sigma)\,.\ee
And since $\HH$ is a vector bundle orthogonal to $\V$, 
 $$\mbox{span}\left[\{e_\sigma\}_{\sigma=m+1}^n\right]=\V|_\theta\mbox{~~~e~~~span}\left[\{e_i\}_{i=1}^m\right]=\HH|_\theta\,.$$
The horizontal and vertical projections can then be written as: 
$$\hat H=e^i\otimes e_i, \qquad \hat{V}=e^\sigma\otimes e_\sigma\,.$$
It is important to note that these (co)frames need not be integrable; i.e., not tangent to any foliation of the bundle, and thus they only define a projection of tangent vectors and not of scalar functions on the bundle.

In the standard Kaluza--Klein construction, the derivations above illustrate that requiring a $G$-invariant inner product implies that the vertical directions are Killing once a connection $\omega$ and a group $G$ are specified. Later in Section~\S\ref{sec:KK_config}, this logic will be turned around: the group fibres of the kinematical metric of the theory will be assumed to be Killing. Defining these directions to be vertical and using orthogonality with respect to the kinematical metric to define horizontality will uniquely fix the connection-form $\omega$. 

\subsection{Kaluza--Klein curvature}\label{sec:KK_curv}

We now use the metric defined in the previous section to compute the Levi--Civita connection in $P$; i.e., the only torsionless one compatible with our metric. These computations require us to picture $P$ as a standard Riemannian manifold rather than a bundle.  The purpose of doing this will be to compare the geodesic structure of the base space $M$ with the projection of geodesic curves in the bundle $P$. Our first task will therefore be to compute the components of the bundle curvature defined by \eqref{eq:curv_h} so that we may compare this to the curvature of the Levi--Civita connection.

Let $\Omega\in\Gamma(\Lambda^2(TP^*)\otimes\fg)$ be the curvature-form in $P$, given in \eqref{eq:curv_h}, and the connection-form  $\omega$ be the vertical projection followed by the isomorphism between $\V$ and  $\fg$. 
As we have a local reference frame in $\V$, we can define the real-valued two-forms, $\Omega^\sigma\in\Gamma(\Lambda^2(\HH^*))$, as:  
$$\Omega^\sigma\otimes \tilde e_\sigma:=\Omega\,.$$ 
 We can further dismember this relation to obtain the real-valued components of the two-form. Since the form components of $\Omega$ are horizontal, as per \eqref{eq:curv_h2}, we define $F_{ij}^\sigma$ as
\be\label{fiji} (\frac{1}{2}F_{ij}^\sigma e^j\wedge e^i)\otimes \tilde e_\sigma:=(\Omega^\sigma)\otimes \tilde e_\sigma\,.\ee

We now turn attention to the Riemannian structure of $P$ and $M$ because we want to relate these to the bundle structure. Given the reference frame $e_A$, the easiest manner to capture the Levi--Civita connection is through the spin connection. That is: 
$$\d e^A=\w^A_B\wedge e^B\,. $$
The torsionless condition implies $\w_B^A=-\w_A^B$. Writing $\omega=e^\sigma\otimes \tilde e_\sigma$, we obtain 
\be\label{eq:important} \d\omega=\d(e^\sigma)\otimes \tilde e_\sigma=\w^\sigma_B\wedge e^B\otimes \tilde e_\sigma.\ee
Note that the derivative only acts on $e^\sigma$ because $\tilde e_\sigma$ is here a fixed element of the Lie-algebra. 

Using (\ref{eq:curv_h}), which ensures the curvature is purely horizontal as a differential form, we have: 
\be\label{a1}\Omega=\w^\sigma_i(e_j)e^j\wedge e^i\otimes \tilde e_\sigma\,.\ee
Comparing (\ref{a1}) and (\ref{fiji}) we obtain a first relation between the Riemman curvature of the bundle and gauge curvature of the bundle:
\be\label{a4} \w^\sigma_i(e_j)e^j\wedge e^i=  \frac{1}{2}F^\sigma_{ij}e^j\wedge e^i\,,\ee
and thus: 
\be\label{a6} \w^\sigma_i(e_j)=\frac{1}{2}F_{ij}^\sigma\,.  \ee 

Let ${^{\text{\tiny M}}\Gamma}$ be the Levi-Civita connection-form in $M$ relative to  $X_i=\pi_* (e_i)$ and the metric $h$. We then have:
\be \d\lambda^i={^{\text{\tiny M}}\Gamma}^i_j\wedge\lambda^j\,. \ee
Applying the pull-back $\pi^*$ (which commutes with $\d$, acting on the appropriate spaces) on both sides, we get:
\be\label{a7}\begin{array}{ll}
\d(e^i)& =\d(\pi^*\lambda^i)=\pi^*\d(\lambda^i)=\pi^*({^{\text{\tiny M}}\Gamma}^i_j\wedge\lambda^j)\\
~& =\bar\Gamma^i_j\wedge e^j\\
\d(e^i)& =\w^i_B\wedge e^B=\w^i_j\wedge e^j+\w^i_\sigma\wedge e^\sigma\,,
\end{array}\ee
where $\bar\Gamma^i_j:=\pi^*{^{\text{\tiny M}}\Gamma}^i_j$. Because $\bar\Gamma^i_j(u)={^{\text{\tiny M}}\Gamma}^i_j(\pi_*(u))$, it contains the representation of the base curvature only.  To fully display all the relations, we can  write down the  Christoffel symbols as $$\w^A_{BC}:=\w^A_B(e_C)\Longrightarrow \w^A_B=\w^A_{BC}e^C\,.$$
Now we apply \eqref{a7} to $(e_\alpha, e_k)$, and since $\pi_*(e_\alpha)=0$, we obtain $\w^i_j(e_\sigma)=\w^i_\sigma(e_j)$. Applying \eqref{a7} to $(e_i, e_j)$ we obtain:  $\bar\Gamma^i_j(e_k)=\w^i_j(e_k)$; and to $(e_\alpha, e_\beta)$ we get: $\w^i_\sigma(e_\beta)=0$ (and therefore $\w_i^\sigma(e_\beta)=0$). 

To obtain the full set of  relations between the gauge and the Riemann curvatures, we must collect more results. Using \eqref{eq:curv_h1}; i.e.,   $\d\omega=\Omega-\omega\wedge\omega$, and \eqref{eq:important}, we get:
\be\label{a5}\begin{array}{ll}
(\w^\sigma_B\wedge e^B)\otimes \tilde e_\sigma & =((\frac{1}{2}F_{ij}^\sigma e^j+\w^\sigma_{i}(e_\beta)e^\beta)\wedge e^i)\otimes \tilde e_\sigma-\frac{1}{2}e^\beta\wedge e^\nu[\tilde e_\beta,\tilde e_\nu]\\\\
\therefore~~\w^\sigma_B\wedge e^B & =\frac{1}{2}F_{ij}^\sigma e^j\wedge e^i+\frac{1}{2}C^\sigma_{\beta\nu} e^\beta  \wedge e^\nu\\\\
~& =\w^\sigma_\beta\wedge e^\beta + \w^\sigma_i\wedge e^i\,,
\end{array}\ee
and therefore, by (\ref{a4}) we have 
\begin{align}
\w^\sigma_\beta&=  -\frac{1}{2}C^\sigma_{\nu\beta}e^\nu\\
\w^\sigma_i&= -\frac{1}{2}F_{ij}^\sigma e^j\,. \label{eq:gather}
\end{align}
Finally, replacing \eqref{eq:gather} in (\ref{a7}) we get:
\be \w^i_j\wedge e^j=\bar\Gamma^i_j\wedge e^j -\frac{1}{2} F^\sigma_{ij}e^\sigma\wedge e^j\,, \ee
and therefore (since $\bar\Gamma^i_j(e_\sigma)=0$): 
$$\w^i_j(e_k)=\bar\Gamma^i_j(e_k), \quad\text{and}\quad \w^i_j(e_\sigma)= -\frac{1}{2} F^\sigma_{ij}\,.$$

Collecting all the results, we have:
\be\label{eq:main}
\left\{\begin{array}{ll}
\w^i_j= & \bar\Gamma^i_j-\frac{1}{2}\underset{\sigma}{\sum} F^\sigma_{ij}e^\sigma\\\\ 
\w^\sigma_i= & \frac{1}{2}F^\sigma_{ij}e^j\\\\
\w^\sigma_\beta= & -\frac{1}{2}C^\sigma_{\nu\beta}e^\nu
\end{array}\right.\ee
The decomposition above is central to the Kaluza--Klein analysis and underlies the decomposition that we make use of in our own analysis. The first of these equations is the most important. $\w^i_{j\sigma}\neq 0$ tells us that a horizontal vector, parallel transported along a vertical direction, will have its horizontal component rotated. Moreover, this rotation is precisely given by the gauge curvature.

The second equation tells us that a vertical direction does not rotate into a horizontal direction when transported along another vertical direction, but it does when transported along a horizontal direction. The amount of rotation is again given by the curvature.

The third equation tells us that the purely horizontal part of this transport reproduces the base-space transport. That is, the way a horizontal vector rotates in the horizontal direction, when transported along a horizontal direction, is identical to the analogous rotation on the base space. 

The decomposition \eqref{eq:main} is the main result necessary to express the Lorentz force in geometric terms. We will see in the next section that, for the trajectories of charged particles, the Lorentz force is manifested by the difference between the projection of the geodesics in $P$ and the geodesics in $M$. 

\subsection{The generalized Lorentz force}     
\label{sub:gen Lorentz}

In the relativistic case, the Lorentz force acquires a temporal component on top of the standard spatial one. The resulting force is simply proportional to the action of the curvature field $F$ over a given particle; i.e., $qF (v)$ for the (dual)-force exerted on the particle with 4-velocity $v$ and charge $q$. Apart from the standard spatial component, $qF (v)$ also contains a temporal term $q E_i v^i$, which is the work done by the field on the particle.

Now, let $\gamma:I\ra P$ be a geodesic and  $\bar\gamma=\pi(\gamma)$ be its projection to $M$. 
To compute the generalized Lorentz force, we first write the tangent to the geodesic curve as: 
\be\label{eq:geod_dec} \gamma'=u^i e_i +q^\sigma e_\sigma\,, 
\ee
where  $\pi_*(\gamma')=\bar\gamma'=u^i X_i$. Suggestively, we will call the function  $q:I\ra \fg$ given by  
\be q(t)=\omega(\gamma'(t))=q^\sigma(t)\tilde e_\sigma \label{eq:q_vert}\ee the curve's \emph{specific charge}.  Note that in the space-time case, we will use a prime, as in $\gamma'$,  to denote the derivative along the curve.  

In the general case, we will expect the Lorentz force to be of the form: 
\be\label{eq:Lor} F_j:= F_{ij} u^iq_\sigma\,.
\ee
To find the corresponding Lorentz-force deviation in the direction of the projected geodesic, we compare the projection of the acceleration along geodesics in $P$ with the acceleration along intrinsic geodesics in $M$: 
\be\label{eq:compare}\pi_*(\frac{\D^P\gamma'}{dt})= \pi_*({\nabla^P}_{\gamma'}\gamma') \qquad\text{and}\qquad \frac{\D^M\gamma'}{dt}= {\nabla^M}_{\bar\gamma'}\bar\gamma'\,,
\ee 
 and then take the difference.

We begin by proving a useful theorem.
\begin{theo}\label{ftnt:Killing}
  Let $X$ be a  Killing field in a Riemannian manifold $N$ and $\gamma$ a geodesic in $N$, then the inner product between $X$ and the tangent $\gamma'(t)$ is time-independent.
\end{theo}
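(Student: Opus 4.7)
The plan is to differentiate $\langle X,\gamma'\rangle$ along $\gamma$ and show both contributions vanish, one from the geodesic equation and the other from the Killing property. Concretely, using metric compatibility of the Levi--Civita connection $\nabla$ on $N$, I would write
\begin{equation}
\frac{\d}{\d t}\langle X,\gamma'\rangle \;=\; \langle \nabla_{\gamma'} X,\gamma'\rangle + \langle X,\nabla_{\gamma'}\gamma'\rangle.
\end{equation}

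The second term is immediately zero because $\gamma$ is a geodesic, so $\nabla_{\gamma'}\gamma'=0$. For the first term, I would invoke the defining property of a Killing field: the Lie derivative of the metric along $X$ vanishes, which in terms of the Levi--Civita connection is equivalent to the statement that $\nabla X$ is antisymmetric as a $(0,2)$-tensor, i.e.
\begin{equation}
\langle \nabla_Y X, Z\rangle + \langle \nabla_Z X, Y\rangle = 0 \qquad \text{for all vector fields } Y, Z.
\end{equation}
Setting $Y=Z=\gamma'$ forces $\langle \nabla_{\gamma'} X,\gamma'\rangle = 0$.

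Combining the two observations gives $\frac{\d}{\d t}\langle X,\gamma'\rangle = 0$, which is what we wanted. There is no real obstacle here; the only subtlety worth flagging is the equivalence between the Killing condition $\mathcal{L}_X g = 0$ and the antisymmetry of $\nabla X$, which follows from a short computation using torsion-freeness of $\nabla$ and can either be cited or sketched in one line. The result applies equally well in the pseudo-Riemannian case, which matters because later $\bb G$ on $P$ need not be positive definite; the proof only uses metric compatibility and torsion-freeness, not signature.
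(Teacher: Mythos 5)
Your proof is correct and follows essentially the same route as the paper's: expand $\frac{\d}{\d t}\langle X,\gamma'\rangle$ via metric compatibility, kill one term with the geodesic equation and the other with the antisymmetry of $\nabla X$ coming from the Killing condition. The only difference is presentational --- you spell out the two-term Leibniz expansion that the paper compresses into a single chain of equalities --- and your remark that the argument carries over to the pseudo-Riemannian case is a worthwhile observation given the later use of the bundle metric $\bb G$.
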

\begin{proof}
$$\frac{d}{dt}\langle X, \gamma'(t)\rangle=\langle \nabla_{\gamma'}X, \gamma'(t)\rangle=-\langle \nabla_{\gamma'}X, \gamma'(t)\rangle=0$$
where we used the geodesic equation in the first equality and the Killing condition in the second (i.e. $\langle \nabla_{Z}X, Y\rangle=-\langle \nabla_Y X, Z\rangle$). 
\end{proof}
As a result of Theorem~\ref{ftnt:Killing}, a geodesic will maintain a constant angle with respect to a smooth Killing vector field. We therefore have:
$${\nabla^P}_{\gamma'}q^\sigma={\nabla^P}_{\gamma'}\bb G(e^\sigma, \gamma')=0\,.$$ 
Thus, 
\be \frac{D}{dt}\gamma':= {\nabla^P}_{\gamma'}\gamma'={\nabla^P}_{\gamma'}(u^ie_i)+q^\sigma {\nabla^P}_{\gamma'} e_\sigma\,,
\ee
and therefore:
\begin{eqnarray}
{\nabla^P}_{\gamma'}\gamma'&=&{u'}^i e_i+ u^i(u^j {\nabla^P}_{e_j}+q^\beta {\nabla^P}_{e_\beta})(e_i)+q^\sigma( u^j {\nabla^P}_{e_j}+q^\beta{\nabla^P}_{e_\beta})e_\sigma\nonumber\\
&=&{u'}^i e_i + u^i (u^j \w ^A_i(e_j)+q^\beta\w^A_i(e_\beta))e_A+q^\sigma (u^j\w^A_\sigma(e_j)+ q^\beta\w^A_\sigma(e_\beta)) e_A\,.\nonumber\\
\label{eq:geodesic_KK}\end{eqnarray}
We project this onto the horizontal components:
\begin{eqnarray}
e^k({\nabla^P}_{\gamma'}\gamma')&=&{u'}^k+u^i (u^j \w ^k_i(e_j)+q^\beta\w^k_i(e_\beta))+q^\sigma (u^j\w^k_\sigma(e_j)+ q^\beta\w^k_\sigma(e_\beta))\nonumber\\
&=& {u'}^k+u^i u^j \bar\Gamma^k_i(e_j)-u^iq^\beta F^k_{i\beta}\,,\label{eq:Lor_dev}
\end{eqnarray}
where we used \eqref{eq:main} in the last line. We are interested in $ \pi_*({\nabla^P}_{\gamma'}\gamma') $, as per \eqref{eq:compare}.  Moreover, the projection will only have components along $X_i$, which we can extract by contracting with $\lambda^i$. That is, we use $e^i=\pi^*\lambda^i$, and therefore $e^i({\nabla^P}_{\gamma'}\gamma')= \lambda^i(\pi_*{\nabla^P}_{\gamma'}\gamma'))$, to rewrite \eqref{eq:Lor_dev} as: 
\be \pi_*({\nabla^P}_{\gamma'}\gamma') =({u'}^k+u^i u^j  {^{~M}}\Gamma^k_i(X_j)-u^iq^\beta F^k_{i\beta}) X_k\,.
\ee
But we also have, 
$$\frac{\D^M\gamma'}{dt}= {\nabla^M}_{\bar\gamma'}\bar\gamma' = ({u'}^k+u^i u^j {^{~M}}\Gamma^k_i(X_j)) X_k\,,$$
and therefore \emph{the deviation between the geodesics is precisely given by the generalized Lorentz force}:
\be\label{eq:KK_Lorentz}
\pi_*({\nabla^P}_{\gamma'}\gamma')-{\nabla^M}_{\bar\gamma'}\bar\gamma'=-u^iq^\beta F^k_{i\beta} X_k\,
\ee
 as advertised. Figure~\ref{fig:deviation} gives an illustration of this effect.
\begin{figure}[h]
  \centering
  \includegraphics[width= 0.5\textwidth]{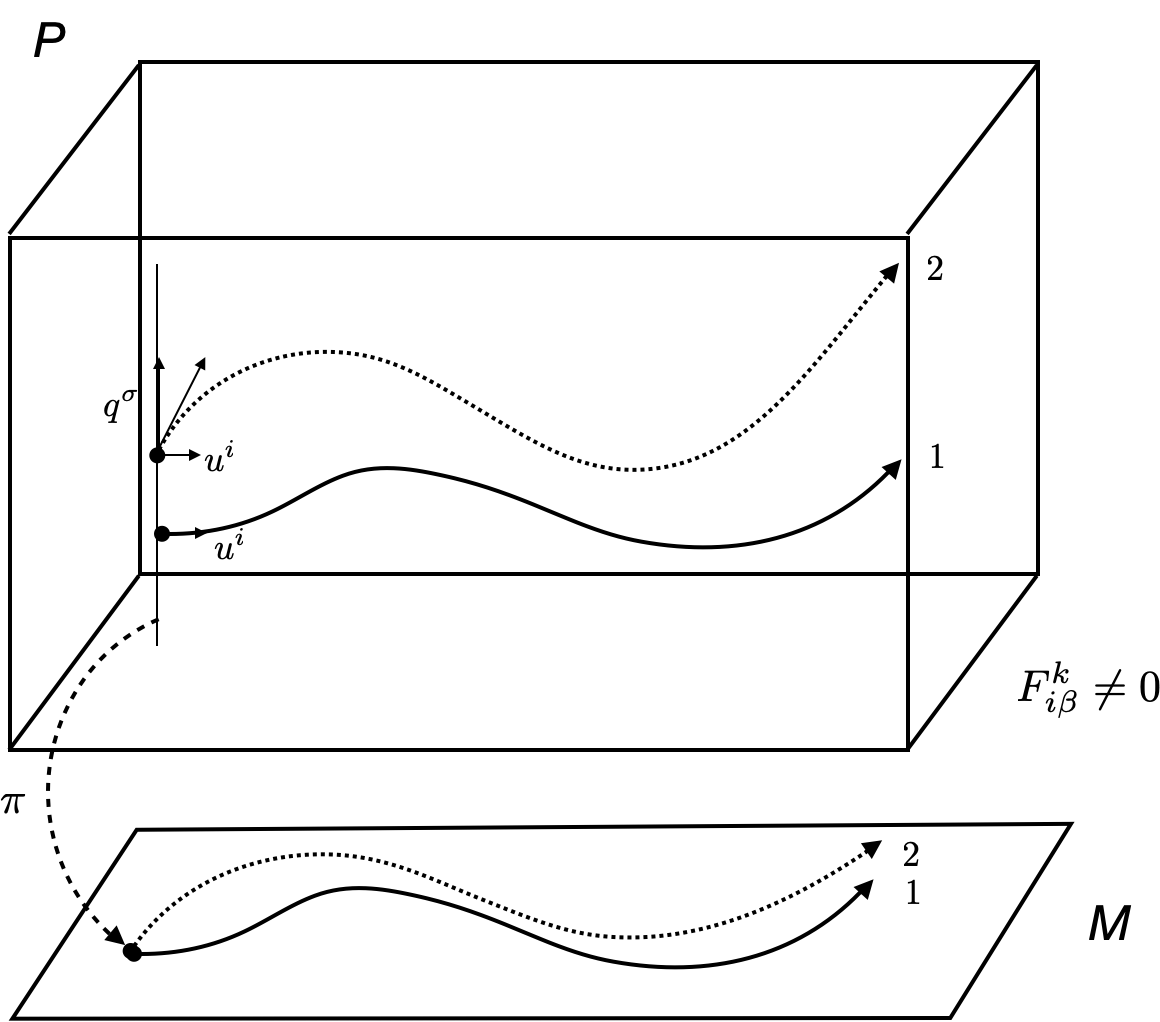}
  \caption{
    The difference between the projection of a curve with vanishing and non-zero vertical velocity $q^\sigma$ can be accounted for by a Lorentz force effect sourced by the bundle curvature according to \eqref{eq:KK_Lorentz}.
  }\label{fig:deviation}
\end{figure}

Although the original intent of Kaluza--Klein theory was to geometrize electromagnetism, it also serves to illustrate a different point: one can find a representation of the dynamics of the charged particle in a background field on a higher-dimensional space with no charge and no electromagnetic forces. However, the geometrical structures of the bundle are not faithfully represented by the intrinsic geometrical structures of the base space, and one must add information. This information is contained in the connection-form, through the curvature, and the vertical velocity, through the charge. 

The emergence of the Lorentz force in the manner described above is a consequence of the difference between a Kaluza--Klein formalism and the standard principal fibre bundle formalism, with fibers interpreted as internal spaces. In the Kaluza--Klein picture, the particles are embedded in  the extra dimensions; but, apart from being topologically and geometrically constrained, the extra dimensions are no different from macroscopic space-time dimensions. In the Kaluza--Klein formalism, the horizontal lift of the curve on the base space to the bundle does \textit{not} necessarily correspond to the motion that is projected.   More concretely, consider a geodesic $\gamma$ in the fibre bundle and its projection $\bar \gamma$ onto the base space. In general, the \emph{unique} horizontal lift, $\gamma_h$, of the projected curve $\bar \gamma$ will not be equal to $\gamma$ (and therefore not be geodesic). The only situation in which equality \emph{does} hold; i.e., when $\gamma_h = \gamma$, is when the charge $q$ is zero.

\section{Kaluza--Klein for configuration space}\label{sec:KK_config}

In the previous section, we reviewed the standard Kaluza--Klein construction that treats a higher-dimensional space-time as a principal fibre bundle. In that picture, the fibres represented extra dimensions and the base space was $3+1$ dimensional space-time. The logic assumed that the bundle connection was fixed because the extra dimensions, while required to be homogeneous and epistemically inaccessible, are otherwise assumed to be no different from the other known spatial dimensions. The goal in that case was to derive the Lorentz-force law on the base space from knowledge of the connection-form and the symmetries of the extra dimensions.

In this section we will flip this logic around. We will take the principal fibre bundle to be the Newtonian configuration space and the base space to be relative configuration space. In this construction, it is the kinematical metric rather than the bundle connection-form that is assumed to be given. The kinematical metric is assumed to have Killing directions along the symmetry group because that is precisely what it means for that group to be a time-independent symmetry of the Lagrangian. These Killing directions are then used to fix the vertical directions on the bundle. The horizontal directions are specified by orthogonality to the Killing directions according to the kinematical metric. This uniquely fixes the bundle connection. The connection defined in this way has been studied in the context of field-theories, where it was called the \emph{relational connection-form}.\footnote{The construction can in principle be applied to any (even infinite-dimensional) configuration space with a Lie group symmetry action that preserves the kinetic term. This was fully described for Yang-Mills theories,  also within the symplectic framework,  in a series of papers \cite{GomesRiello2016, GomesRiello2018, GomesHopfRiello, GomesRiello_new, GomesStudies}, where it was dubbed the \emph{Singer--deWitt connection-form}. \label{ftnt:varpi}} 

The goal now will be to organize the terms in the force law on the base space in a geometrical way; i.e., as arising partly from the dynamics intrinsic to base space and partly as descending from dynamics of the bundle. What we find in this new set-up is that, in addition to the expected Lorentz-force term, there is a new term due to the base-space dependence of the bundle metric along the fibres. This term generalises the Kaluza--Klein construction, where the bundle connection is fixed in a way that respects more closely the properties of the bundle metric. Our construction, while inspired by the Kaluza--Klein formalism, is therefore distinct from it and more general.

In Section~\S\ref{sec:config_varpi}, we will start with an overview of the principal fibre bundle construction in the context of general configuration spaces. We invert the results of Section~\S\ref{eq:bundle_metric} and derive a connection-form,  equation~\eqref{eq:varpi_abstract_solution}, from the kinematical metric. Finally, in Section~\S\ref{sec:curvature}, we will derive the induced bundle curvature, \eqref{eq:curv_formula}.

\subsection{The connection-form defined by orthogonality}\label{sec:config_varpi}

The purpose of this section will be to demonstrate that for any configuration that has a metric that is invariant under the action of some group, we can induce a connection by orthogonality with respect to that metric and the vertical spaces, which are canonical in the sense that they depend only on the group action. In the more general setting introduced in this section, configuration space can even be a field-space. For example, a natural gauge-invariant metric exists in the configuration space of Yang-Mills theory and that induces a connection (cf. \citep[Sec. 4]{GomesHopfRiello}, and references therein). To accommodate this amount of generality and to highlight the differences between the configuration-space approach used here and the space-time-based approach of Kaluza--Klein theory, we will now introduce some notation to distinguish the basic mathematical structures.

The space-time-based fibre bundle, which we previously called $P$, will now be given by a configuration space that we denote by $\F$. To denote those  quantities that now belong to configuration space and not to space-time we will employ the double-struck notation of \citep{GomesHopfRiello}. For instance, $\bb X \in \mathfrak{X}^1(\F)$ is a once continuously (functionally) differentiable vector field \textit{in configuration space}, and $\fLie_{\bb X}$ is a Lie-derivative along $\bb X$ \textit{in configuration space}, $\dd F$ is an exterior functional derivative \textit{in configuration space},  and so on. To further distinguish the configuration space case from the space-time one, we denote the connection-form not by $\omega$ (given in \eqref{eq:connection}), but by $\varpi$, and the curvature 2-form not by $\Omega$ (given in \eqref{eq:curv_h} or \eqref{eq:curv_h2}), but by $\bb F$.  Lastly, in the configuration space context, since we will be using it more often, it is useful to replace the $\iota$ map, inducing a bijection between the Lie-algebra and the vertical spaces ($\iota$ is defined  by the action of the group on the configuration space, as in \eqref{eq:iota}), by a more direct notation: $\iota( \xi)\equiv\xi^\#$. 

A configuration-space metric and a vertical direction supply enough ingredients to define a connection if and only if the directions along the orbit are Killing; i.e., the fundamental vector fields $\xi^\#$ must be Killing. Given a group of transformations, $\cal G$  (possibly infinite-dimensional), 
    \begin{align}
    \fLie_{\xi^\#} \bb G = 0
    \qquad \text{for all} \;
    \xi \in \text{Lie}(\G)\,.
    \label{eq_GG_Killing}
    \end{align}

For a field-space metric $\bb G$, we define the associated connection $\varpi$ by demanding the following orthogonality relation:\footnote{The following equation holds pointwise on $\mathcal{Q}$, where the vector field $\bb X$ identifies, for each configuration $\varphi\in \mathcal{Q}$, a tangent vector $\bb X_\varphi \in {\rm T}_\varphi \F$. In the main text we have omitted the subscripts.}  
  \begin{equation}
  \bb G(\xi^\#, \hat H(\bb X)) \equiv \bb G(\xi^\#, \bb X - \varpi(\bb X)^\#) = 0,
  \label{eq4.3}
  \end{equation}
for all $\xi \in \text{Lie}(\G)$ and all $\bb X \in \mathfrak{X}^1(\F)$. As before, $\hat H$ stands for the horizontal projection induced by $\bb G$ and, writing $\varpi(\cdot)^\#=\hat V(\cdot)$, the equation becomes trivial. 
  
Formally,  equation \eqref{eq4.3} can be solved for $\varpi$ as follows. Let $\bb Q_{ab}$ be the pullback to $\fG$ under $\cdot^\#$ of the metric induced from $\bb G$ on the fibres as expressed in the $\{\tau_a\}$ basis of the Lie-algebra:
  \be
  \bb Q_{ab}=\bb G(\tau_a^\#, \tau_b^\#),
  \label{eq_Qab}
  \ee
and $\bb Q^{ab}$ its inverse. Note that $\bb Q_{ab}$ does not in general coincide with the (point-wise extensions of the) Killing form in $\fg$ (the natural inner product of the Lie-algebra).

  Expanding $\varpi = \varpi^a \tau_a$, equation \ref{eq4.3} can be written as $\bb G(\tau_a^\#, \bb X) = \bb Q_{ab} \fI_{\bb X} \varpi^b$, which is readily inverted as
  \begin{align}\label{eq:varpi_abstract_solution}
  \varpi = \bb Q^{ab} \bb G (\tau_b^\#, \cdot) \tau_a\,.
  \end{align}
  This is \emph{the main result of this section}. Note that $\bb G(\xi^\#, \cdot)$ accepts field-space vectors and hence defines a one-form in field-space.

From the last equation, we immediately obtain the first fundamental property required of a connection-form: $ \varpi(\xi^\#) = \xi$. See \cite[Sec. 4.1, equations 4.7 and 4.8]{GomesHopfRiello} for a proof that an $\varpi$ defined by the procedure above will also transform correctly under gauge transformations if $\xi^\#$ is a Killing vector of $\bb G$. As we saw in \eqref{eq:hor_proj_v}, these properties combine to ensure that $\hat {H}(\bb X)$ has the correct ``vertical corrections'' that render it invariant also under time-dependent (or, more generally, field-dependent) gauge transformations. They also ensure that for any one-form in configuration space, $\bb \lambda$,  the covariant exterior differential $\dd_\varpi\bb \lambda=\dd\bb\lambda-\varpi\wedge\lambda$ is  covariant.   
      
To summarize, a field-space metric determines a vertical projector by providing a notion of orthogonality. If gauge transformations preserve orthogonality to the fibres, then the vertical projector gives a connection.

  \subsection{The curvature of the connection-form\label{sec:curvature}}

  A natural question to ask is how the properties of a field-space connection are linked to the properties of the field-space metric that determines it. In particular, one may ask if the curvature of the field-space connection can be calculated directly from the field-space metric in a useful way. The answer is affirmative, as was shown in \cite{GomesHopfRiello}.
  
  The intuition is the following: $\varpi$ contains information about the horizontal planes, which are the planes orthogonal to the gauge orbits. If those planes can be integrated in the sense of Frobenius' theorem to (infinite-dimensional) hyper-surfaces, then $\varpi$ is flat. The curvature $\fF$ of $\varpi$ corresponds to the anholonomicity, or non-integrability, of the planes orthogonal to the gauge orbits. Thus it is possible to obtain the curvature directly from the metric.
  
  The resulting relationship between a field-space-metric $\bb G$ and the curvature $\fF$ of the associated $\varpi$ is: 
  \begin{equation}\label{eq:curvature_and_metric}
  \bb G \big( \bb F(\bb X, \bb Y)^\#, \xi^\#\big) = \dd (\bb G (\xi^\#)) (\hat H (\bb Y), \hat H (\bb X))
  \quad \text{for all} \;
  \xi \in \text{Lie}(\G), ~\dd \xi = 0,
  \end{equation}
  and any $\bb X, \bb Y \in \mathfrak{X}^1(\Phi)$. On the right hand side, $\bb G(\xi^\#)\equiv\bb G(\xi^\#,\cdot)$ is a one-form on field-space, so $\dd \bb G (\xi^\#)$ is a two-form. By horizontally projecting the dummy vector fields $\bb X, \bb Y$ on the right hand side, we are taking the horizontal-horizontal part of that two-form. Formally solving for $\bb F$, we get
  \begin{align}
  \bb F = \bb Q^{ab} \big(\dd \bb G(\tau_b^\#) \big)_{HH} \tau_a\,,\label{eq:curv_formula}
  \end{align}
  which is \emph{the main result of this section.} Note that, in these formulas, $\dd$ acts on the one-form $\bb G(\xi^\#)$. Even if $\xi$ is taken to be configuration-independent; i.e., $\dd \xi=0$, the operator $\cdot^\#$ generically introduces configuration-dependence. As we will see, this a feature of the hash operator for rotations --- but not for translations. For a proof of \eqref{eq:curvature_and_metric} (and thus the origin of \eqref{eq:curv_formula}) see \cite[Sec. 4.2, equation 4.12]{GomesHopfRiello}.

 We have now developed the technical machinery required to pinpoint the kinds of modifications to the force law on the base space that can be expected from the projection of the dynamics on the bundle. Firstly the Lie algebra can act in a field-space dependent way. This is not the case in Abelian gauge theories, but it is the case for non-Abelian ones.\footnote{In Yang--Mills, $\delta_\xi A= \d \xi + [A,\xi]$ involves the gauge potential $A$, while in general relativity,  $\delta_\xi g = \pounds_\xi g$  involves the metric $g$ (where now $\xi\in\mathfrak{X}^1(M)$ and $\pounds_\xi$ is the space-time Lie derivative along $\xi$). } For rotations, $\delta_\xi \mathbf r= \xi\mathbf{r}_\alpha$, where $\mathbf{r}_\alpha\in \bb R^{3}$ with $\alpha=\{1, \cdots, N\}$ and $\xi\in \mathfrak{so}(3)$. This dependence will in general produce curvature of the connection constructed by orthogonality with respect to the gauge orbits, that is, it implies $\bb F\neq 0$. Secondly, the inner product along the orbits may be base-space dependent, which  will create a potential force for the motion on base space. This is neither the case for translations nor for Kaluza--Klein in space-time, where the vertical kinetic energy is rigidly fixed by the Lie algebra. In those cases, the orbits are metrically the same everywhere. These two effects are important because they are directly linked to the two terms of the main result of this paper: the modified force law, \eqref{config_Lorentz}, derived for rotations in the next section.

\section{ Charging the relationalist for angular momentum}
\label{sec:LR}

 In this section, we take the general formalism of the previous section and apply it in the context of translations and rotations in Newtonian mechanics. The goal is to project the dynamics of the system onto relative configuration space. We obtain a generalised deviation equation analogous to the Lorentz force law of \eqref{eq:KK_Lorentz}. In the case of translations, the group action is field-independent, and so are the orbit metrics, and thus the bundle curvature is exactly zero and there are no potential terms on base space, and therefore the bundle projection is trivial. In the case of rotation, the bundle curvature is non-zero and the orbit metric depends on the point on base space, thus we find two independent terms that couple to a conserved $SO(3)$-charge. This charge is the cost of expressing effects due to the total angular momentum in a Newtonian system directly in relational terms.

In our construction, we take the principal fibre bundle to be the configuration space $\F\simeq \bb R^{3N}$ of $N$ point particle positions in 3 dimensions. In line with the assumption of Section~\S\ref{sec:KK_config}, this space is endowed with a canonical inner product that is invariant under the action of both translations and rotations. This canonical inner product is precisely the kinematical inner product used to construct the Newtonian kinetic energy, and the potential energy function is usually assumed to reflect these kinematical symmetries. 

We can therefore take the structure group for $\F$ to consist of either translations and/or rotations, and in each case we can find a connection-form through the introduction of an equivariant horizontal sub-bundle of $T\F$, as per Section~\S\ref{sec:config_varpi}. Given these choices, the group orbits define vertical spaces as the tangent to these orbits. The fixed kinematical metric of Newtonian mechanics then defines the orthogonal spaces to the orbits as the horizontal complement of $V\subset T\F$. This then leads to a fixed connection-form and curvature, which we compute in Section~\S\ref{sec:LR_intro} following the procedure of Section~\S\ref{sec:config_varpi} and Section~\S\ref{sec:curvature} . The bundle curvature we obtain enters the desired deviation relation: equation~\ref{config_Lorentz}.

 \subsection{The rotation bundle}\label{sec:LR_intro}
 
 While many elements of this construction have been beautifully laid out in \cite{LittlejohnReinsch}, one of the great advantages of the present construction is that it does away with `rotational frames' for the motion. We have now developed enough mathematical machinery to explain some of these advantages in more detail.
   
The construction of \citet{LittlejohnReinsch} employs specific frames with which to measure the rotation. Transformations of these frames are seen as `passive'; i.e., they are gauge transformations taking one section of the bundle to another. In the principal fibre bundle formalism, we do away with the definition of frames and concentrate on the active definition of symmetry transformations. This immensely simplifies the treatment while still capturing the relevant effects in the base space.

In the principal fibre bundle picture, a choice of frame of the kind used in the bulk of \citet{LittlejohnReinsch} provides a section of the bundle $s:\F/\G\rightarrow \F$.\footnote{When pulling back the constructions of Section~\S\ref{sec:PFB_intro} on the principal fibre bundle to this section, one deals with  objects on the associated bundle such as the connection $A=s^*\omega$ and curvature $B=s^*\Omega$.} Explicitly constructing such a section is non-trivial work, which we will not require in this paper. Without a section, we are only able to talk about a total change in orientation of a system for closed curves in configuration space. For an open curve, one must establish an orientation convention to describe the overall change between the start- and end-points of the curve. This structure is enough to describe the infinitesimal change of orientation (i.e., the rotation) along a curve. It is also sufficient to describe curvature and the covariant divergence of the potential. 
 
When we let the group of rotations act on a given configuration, we assume an active view: we are not changing the orientation convention for the relation between the inertial space frame and the body frame. Instead, we are  rotating the configuration. Thus, if a Newtonian configuration represents an inertial frame, the rotated configuration will represent a rotated inertial frame.\footnote{ Although the difference may seem inconsequential, it in fact implies a \textit{right} action of $R\in \SO(3)$ on the $\mathbf {r}_\alpha$, not a left one (cf. \citep[p. 244]{LittlejohnReinsch}).}
 
 We start with $N$ particles on $\bb R^3$, with coordinates  in a given inertial frame $\rr_\alpha$, $\alpha=1, \cdots N$, and masses $m_\alpha$. The Lagrangian will be of the form 
 \be\label{eq:Lagrange}\mathcal{L}=K(\rr, \dot\rr)-V(\rr).
 \ee
  The total kinetic energy of the system, which will determine our configuration space metric, is 
 \be\label{eq:kinetic}K(\dot\rr)=\frac12\sum_\alpha m_\alpha |\dot\rr_\alpha|^2=\frac12\sum_\alpha m_\alpha (\dot\rr_\alpha\cdot \dot \rr_\alpha)\,, \ee
 where $\cdot$ is the Euclidean inner product in $\R^3$. The kinetic term $K$ is equivalent to a choice of inner product:
 \be\label{eq:config_ip}\bb G(\dot\rr, \dot\rr')= \frac12\sum_\alpha m_\alpha \dot\rr_\alpha\cdot \dot \rr'_\alpha.\ee
 Note that, in the relativistic Kaluza--Klein case, we used primes to denote derivatives along the curve. Here, in the non-relativistic configuration space of $N$ particles, we will revert to the standard dot notation; i.e., $\dot \rr_\alpha$. 
 
 Equation \eqref{eq:config_ip} is clearly invariant under the time-independent transformations: 
 \begin{subequations}
 \begin{align}
  \rr_\alpha&\mapsto \rr_\alpha+\mathbf{v}\\
   \rr_\alpha&\mapsto \rr_\alpha R \,.
\end{align} \end{subequations}
Infinitesimally, the group actions of $\bb R^3$ and $\SO(3)$ correspond to, respectively (for $\mathbf v\in \R^3$ and $\xi\in \mathfrak{so}(3)$):
\begin{subequations}\label{eq:gts}
 \begin{align}
  \delta_{\mathbf v}\rr_\alpha &=\mathbf{v}\\
   \delta_{\xi}\rr_\alpha &=\rr_\alpha \xi\,.
\end{align} \end{subequations}
In terms of the Lie derivative (where $\delta_\xi\rr=:\xi^\#$),
$$\bb L_{\xi^\#}\bb G(\bb v, \bb v' )=\frac{1}{t}\lim_{t\rightarrow 0}(\bb G(\bb v, \bb v')- \bb G(\bb v \xi, \bb v' \xi))=0\,.
$$
In other words: the metric has Killing directions along $\xi^\#$. 

 By constructing a connection-form in configuration space, we can extend the invariance \textit{of the horizontal part of the metric} to one under time-dependent transformations. A separate question exists about the physical meaning of such connection-forms. 
 
For now, it suffices to note that a connection-form as defined by orthogonality to the orbits, as in section \ref{sec:config_varpi}, \textit{is} dynamically preferred because the metric in configuration space defines the kinetic energy. To find what the horizontality condition implies, we consider purely vertical directions; i.e., tangent vectors along the orbit. Orthogonality with respect to the translation orbits gives:
\be\label{eq:ortho_trans}  \bb G(\dot \rr,  \delta_{\mathbf v})=  \frac12\sum_\alpha m_\alpha \dot\rr_\alpha\cdot  \mathbf{v}=0, \qquad \forall  \mathbf{v}\in \R^3
\ee
This implies horizontal velocities $\rr_\alpha=\cc_\alpha$ must obey the constraint:
 $$\sum_\alpha m_\alpha \dot\cc_\alpha=0\,;$$
i.e., the \emph{vanishing of the total linear momentum}. By fixing the integration constant to zero, we quotient out translations by going to a centre of mass frame, $\cc_\alpha$, which obeys:
  $$\sum_\alpha m_\alpha \cc_\alpha=0.$$
 
 A rotation $R\in \SO(3)$ acts on this translationally reduced system, but the horizontality condition is a little more complicated. If $Q(\varphi, \boldsymbol {n})$ denotes a counterclockwise rotation with angle $\varphi$ about the axis specified by the unit vector $\boldsymbol {n}$, then

$${\displaystyle \left.{\frac {\operatorname {d}} {\operatorname {d} \varphi} }\right|_{\varphi =0}Q(\varphi ,{\boldsymbol {n}}){\boldsymbol {x}}={\boldsymbol {n}}\times {\boldsymbol {x}}}$$
for every vector $\boldsymbol {x}\in \R^3$.
This can be used to show that the Lie algebra $\mathfrak{so}(3)$ (with commutator) is isomorphic to the Lie algebra $\mathbb {R} ^{3}$ (with cross product). If we take $J_a$ as a basis of the $\mathfrak{so}(3)$ Lie-algebra,\footnote{In more detail, a most often suitable basis for $\mathfrak{so}(3)$ as a 3-dimensional vector space is
$${\displaystyle J_{\mathbf {x} }={\begin{bmatrix}0&0&0\\0&0&-1\\0&1&0\end{bmatrix}},\quad J_{\mathbf {y} }={\begin{bmatrix}0&0&1\\0&0&0\\-1&0&0\end{bmatrix}},\quad J_{\mathbf {z} }={\begin{bmatrix}0&-1&0\\1&0&0\\0&0&0\end{bmatrix}}.}$$
The commutation relations of these basis elements are,$$
{\displaystyle [J_{\mathbf {x} },J_{\mathbf {y} }]=J_{\mathbf {z} },\quad [J_{\mathbf {z} },J_{\mathbf {x} }]=J_{\mathbf {y} },\quad [J_{\mathbf {y} },J_{\mathbf {z} }]=J_{\mathbf {x} }}$$
which agree with the relations of the three standard unit vectors of $\mathbb {R} ^{3}$ under the cross product. \label{ftnt:basis}} 
we have
\be\label{eq:Jhash} J_{(a)}^\#=\boldsymbol {e}_{a}\times \cc_\alpha\,,\ee
where $\boldsymbol {e}_{a}$ is the unit vector along the direction $a$. 

Thus, the analogous condition to \eqref{eq:ortho_trans} for rotations becomes:
\be\label{eq:ortho_rot}  \bb G(\dot \cc,  \delta_{\boldsymbol {n}}\cc)=  \frac12\sum_\alpha m_\alpha \dot\cc_\alpha\cdot ({\boldsymbol {n}}\times \cc_\alpha) =0, 
\ee
for $\boldsymbol {n}$ a unit vector. Equation \eqref{eq:ortho_rot}  gives the following condition for velocities orthogonal to the fiber:\footnote{This identity is trivial if written in components (i.e. using the totally-antisymmetric tensor in three-dimensions: $(\epsilon_{ijk} n^i c^j)\dot c^k=-(\epsilon_{ijk} \dot c^i c^j)n^k$.}
\be \mathbf{L}=\sum_\alpha m_\alpha \dot\cc_\alpha\times \cc_\alpha=0. 
\ee
where $ \mathbf{L}$ is of course the angular momentum vector. In short, a motion on the (translationally-reduced) configuration space $\R^{3N-3}$ is orthogonal to the rotation orbits \emph{if and only if the associated angular momentum of this motion vanishes}.

 \paragraph{The horizontal and vertical projections}
 
 In order to use our formalism to obtain the connection-form and curvature, \eqref{eq:varpi_abstract_solution} and \eqref{eq:curv_formula} respectively, we need the vertical metric $\bb Q^{ab}$, where $a, b$ are the indices parametrizing the basis of infinitesimal rotations ($a,b=x, y,z$). For translations, $\bb Q$ is trivial: $\bb G(\delta_{\mathbf v},  \delta_{\mathbf v}')=  \mathbf v\cdot \mathbf v'$, which is completely independent of the translationally-reduced configuration. 
 
 For rotations on the other hand, we obtain:
 \begin{align}  \bb G(  \delta_{\boldsymbol {n}}\cc,    \delta_{\boldsymbol {n}'}\cc) &=  \frac12\sum_\alpha m_\alpha ({\boldsymbol {n}}\times \cc_\alpha)\cdot ({\boldsymbol {n}'}\times \cc_\alpha)\\
 &=  \frac12\sum_\alpha m_\alpha\left(|\cc_\alpha|^2\delta_{ij}-c_{\alpha i} c_{\alpha j}\right)n^i {n'}^j\,.
 \end{align}
 This yields a vertical metric, which is just the moment of inertia tensor:
 \be\label{eq:MIT}
 \bb Q_{ab}=M_{ab}=\frac12\sum_\alpha m_\alpha\left(|\cc_\alpha|^2\delta_{ab}-c_{\alpha a} c_{\alpha b}\right)\,.
 \ee
Because 
  \be\label{eq:intermediary}\bb G(\cdot, J_b^\#)= \frac12\sum_\alpha m_\alpha \dd c^i_\alpha  c^k_\alpha \epsilon_{ibk}), 
\ee
where $\dd c^i_\alpha$ is a basic configuration space 1-form (like $\d x$ would be in space-time), we obtain the following expression for the connection-form from \eqref{eq:varpi_abstract_solution}:
 \be\label{eq:varpi_rot}
 \varpi(\dot \cc)=(M^{-1})^{ab} \left(\frac12\sum_\alpha m_\alpha \epsilon_{bij} c_\alpha^i \dot c_\alpha ^j\right ) J_{(a)}\,.
 \ee
 Given an infinitesimal change of configuration, \eqref{eq:varpi_rot} provides the necessary rotation for that change to carry no angular momentum. By the properties of the connection-form --- arising from orthogonality to the fibre with respect to a $\G$-invariant metric, discussed in section \ref{sec:config_varpi} --- this adjustment is gauge-covariant; i.e., it does not depend on the orientation of the configuration we started from. The connection-form defines a standard of orientation infinitesimally along a curve.
 
It is also easy to  write the vertical projection. From \eqref{eq:varpi_abstract_solution}, we have $\hat V=\bb Q^{ab} \bb G (J_b^\#, \cdot) J^\#_a$.\footnote{By writing the basis $J^\#_a=|\ell_a\rangle$ and $\bb G (J_b^\#, \cdot) =\langle \ell_b|$ to match \citet{LittlejohnReinsch}'s notation, we obtain their equation 5.42: $\Pi_V= |\ell_a\rangle (M^{-1})^{ab} \langle \ell_b|$.} If use equation~\eqref{eq:MIT} and insert \eqref{eq:Jhash} into \eqref{eq:varpi_rot}  we obtain: 
\be\label{eq:vert_proj} \hat V(\dot \cc)=\varpi(\dot\cc)^\#=\sum_\alpha m_\alpha \dot\cc_\alpha\times \cc_\alpha=\mathbf{L}\,,
\ee
which is the angular momentum. The horizontal projection is just its complement: $\hat H=\bb 1-\hat{V}$. Given a generic centre-of-mass configurational velocity $\dot \cc_\alpha$, the corrected velocity $\dot \cc_\alpha-\hat V(\dot \cc)$ has vanishing angular momentum. 

Curvature implies that, for a closed loop in the base space, the orientation may change even for motion with zero angular momentum. We can now write the curvature using \eqref{eq:curv_formula}. First, from \eqref{eq:intermediary}:
$$\dd\bb G(\cdot, J_b^\#)= \frac12\sum_\alpha m_\alpha \epsilon_{ibk}\,\dd c^i_\alpha\curlywedge \dd c^k_\alpha, 
$$ where $\curlywedge$ is the exterior differential for forms in configuration space. From this equation it is apparent that if the group action on configuration space did not depend on the configuration, the exterior derivative $\dd$ would have nothing to act non-trivially on. We would then obtain $\dd\bb G(\cdot, J_b^\#)=0$, and vanishing curvature as a consequence. This is what occurs for translations. As it stands, the curvature for the rotational bundle can be written as:
\be \bb F(\dot c, \dot c')=(M^{-1})^{ab} \left(\frac12\sum_\alpha m_\alpha \epsilon_{bij} \hat{H}(\dot \cc_\alpha)^i \hat H(\dot \cc'_\alpha )^j\right ) J_{(a)}\label{eq:curv_rot}\,,
\ee
which only depends on the base space through the moment of inertia tensor and the horizontal projections.\footnote{Two other facts that are not important for us here, but deserve mention: by construction,  $\bb F$ obeys the Bianchi identity, i.e. for the gauge-covariant exterior derivative, $\dd_\varpi \bb F:=\dd \bb F +[\bb F, \varpi]=0$. Moreover, the covariant base space divergence of $\bb F$, which would, in the space-time context, give rise to the sourced Yang-Mills equation, $\dd_\varpi^\dagger\bb F=\rho$, where $\dd_\varpi^\dagger$ is the adjoint of $\dd_\varpi$ under the base space projection of $\bb G$, here  can be shown to always vanish for the rotations (cf. \citep[Eq. 5.2]{LittlejohnReinsch}).}

\paragraph*{Translations}

For translations, the generators of the algebra $\bb R^3$ can be taken to be the unit vectors, $\tau_a=e_a$, given by $\{e_x, e_ y, e_ z\}.$ The vertical metric $\bb Q^{ab}=\delta^{ab}$, and therefore 
$$\varpi_{\text{\tiny trans}}(\dot \rr)= (\sum_\alpha m_\alpha \dot \rr_\alpha \cdot e_a) e_a= \sum_\alpha m_\alpha \dot \rr_\alpha, 
$$ since $\sum |e_a\rangle\langle e_a|$ is the identity operator. 

Analogously to the rotational case, the connection-form defines, infinitesimally along the trajectory of the system, the standard of linear translations. The translational connection therefore yields the linear momentum of the configurational velocity. Horizontal motion, in analogy to the rotational case, coincides with a choice of coordinate system for which the total \textit{linear} momentum vanishes. In other words, for an arbitrary velocity $\dot \rr_\alpha$, the corrected velocity $\dot\rr_\alpha-\varpi_{\text{\tiny trans}}(\dot \rr)$ has vanishing linear momentum. From \eqref{eq:curv_formula}, the curvature clearly vanishes, since neither $\bb G$ nor $\tau_a^\#$ depend on the configuration. This is the main difference compared with the rotational case.

\paragraph*{Relation to Barbour-Bertotti theory}

First, we establish the relation between $\varpi$ and best-matching. The $\bb G$-induced connection-form $\varpi$, here illustrated for point particles under rotations and translations (but also available more generally for field-theory), performs, through horizontal projection, precisely what is generally recognized in the philosophical literature as the job of \textit{best-matching} (cf. \cite{Barbour_Bertotti},  \citet[Ch. II.5]{Flavio_tutorial} and references therein, and \cite{gomes_riem} for the original expression of best-matching in terms of a connection-form). The idea there is the same one as here: given neighboring orbits, a best-matched infinitesimal trajectory between them is one that extremizes some norm on the difference. For the standard, equivariant version, the norm must be $\G$-invariant, and is in fact the one induced by $\bb G$. Given some representative of a velocity between neighboring fibres; e.g., $\dot \cc_\alpha$, the best-matched velocity is 
$$\hat H(\dot\cc)\equiv \cc_\alpha-\varpi(\dot\cc)^\#\,,$$
which, by \eqref{eq:hor_proj_v}, is gauge-covariant. By itself it is not gauge-invariant: it only renders a horizontal kinetic term that is already invariant under time-independent transformations completely  gauge-invariant, as we will see below.  

In Barbour--Bertotti theory, the motion of the universe must be describable by a best-matched trajectory in the above sense so that the total angular momentum of the universe is constrained to be zero. A non-zero angular momentum cannot be accommodated because the vertical metric depends on it, and therefore a non-trivial vertical velocity (i.e., an angular momentum) can influence the dynamics on the bundle. This constraint is problematic: isolated subsystems of the universe clearly can have non-zero angular momentum, and thus lie outside the scope of the theory. Barbour--Bertotti theory is, in this sense \textit{cosmological}. In particular, it does not satisfy subsystem-recursivity as advocated in \cite{pittphilsci16623} because subsystems can have angular momentum while the universe cannot. 
 
Here we have removed these limitations. Indeed, from the perspective of the bundle, the kinematic term in the action determines a standard of rotation \textit{at each infinitesimal step or change} along a curve. This standard is established for both zero and non-zero angular momentum.

From the above, let the set of particles indexed by $\alpha$ split into two sets: $\alpha_I\in \Lambda_I$ and $\alpha_{II}\in \Lambda_{II}$. Starting with the $\cc_\alpha$ coordinates of the entire system, we define the individual centres-of-mass of systems $I$ and $II$, $q_{I}$ and $q_{II}$, and new coordinates $\cc_\alpha^{I}$ and $\cc_\alpha^{II}$ such that: 
\be \cc_\alpha=\begin{cases} 
\cc_\alpha^I+q^I\quad\text{for}\quad \alpha\in \Lambda_I\\
\cc_\alpha^{II}+q^{II}\quad\text{for}\quad \alpha\in \Lambda_{II}\,,
\end{cases} \ee
where e.g. $q^{I}:=-\frac{1}{\sum_{\alpha\in \Lambda_{I}} m_\alpha}\sum_{\alpha\in \Lambda_{I} }m_\alpha \cc_\alpha$. This formalism applies if and only if there are no external torques acting on the system. This occurs if $\|q_I-q_{II}\|>>\|\cc_\alpha^{I, II}\|, \quad \forall \alpha$. More colloquially, this condition will hold if the distance between the clusters of particles is much greater than the distance between the particles of the same cluster.\footnote{Under these conditions, for radial potentials, it will generally be the case that $V\approx V_I(\cc_\alpha^I)+V_{II}(\cc_\alpha^{II})+V(q_I, q_{II})$ where $q_I$ and $q_{II}$ are the centres of mass of the two systems. }

\subsection{The projected dynamics}\label{sec:KK_proj}

In this section we will project the dynamics in the absolute Newtonian configuration space to relative configuration space. We will compare the projection of geodesic curves in the bundle with different vertical velocities to geodesic curves on the base space. The difference in the acceleration of these curves will give a deviation equation analogous to the Lorentz force seen in standard Kaluza--Klein theory (Equation~\ref{eq:KK_Lorentz}). But before doing this, let us consider some of the general properties of the dynamics under this decomposition.

Using the vertical and horizontal projection of the motion, and assuming the potential term depends only on the relative configuration space, we can rewrite the Lagrangian from  \eqref{eq:Lagrange} as: 
\be \mathcal{L}=K(\hat{H}(\dot \cc))+K(\hat{V}(\dot \cc))- V([\cc])=K_V+K_H- V\,,
\ee
where the potential term should only depend on the orbit and not on the particular configuration (i.e., the potential does not depend on the orientation or absolute position of the configuration).  

Note that, by the properties of $\varpi$, $K_H$ is fully covariant --- even under transformations of the type $g_t:=g_0\exp(t\xi)$ (cf. \eqref{eq:hor_proj_v}). Since the kinetic term involves only a $\G$-invariant metric, $K_H$ is fully gauge-invariant. The potential term is invariant under both time-dependent and time-independent symmetry transformations. 

The vertical term could be problematic. But as we saw in Theorem~\ref{ftnt:Killing}, the vertical velocity is conserved along geodesic motion (and it is a small step to show that it is also invariant if one adds a gauge-invariant potential). Thus, even if $K_V$ is not gauge-invariant; i.e., not purely determined by the relations, it will only require dim($\mathfrak{g}$) extra constants of motion along the base space trajectories. Each component of the angular and linear momenta will therefore be conserved. Nonetheless, going beyond a mere accounting of degrees of freedom, the vertical kinetic term can have a non-trivial influence on the projected dynamics, and this is the second sort of effect we need to account for. 

Usually the vertical inner product of the configurational metric $\bb Q$ is quite different from the one present in the standard space-time Kaluza--Klein theory,  which is induced by the Killing form in \eqref{kil}. Likewise, the horizontal metric is the  space-time metric in standard Kaluza--Klein, but for configuration space it is equivalent to $\bb G\circ \hat H$. 

Taking these possibilities into account,  we possess all the ingredients to turn the crank of Section~\S\ref{sec:KK_st}. In particular, we can compute the equations of motion for the full configuration space, relating them to the Ehresmann bundle curvature and conserved quantities. To do this, we compute the equations of motion for the base space Lagrangian: 
$$\mathcal{L}_H:=K_H-V\,,$$
and then we can compare, as we did in \eqref{eq:compare}, the equations of motion leading to the acceleration:
\be\label{eq:compare_config}\pi_*(\frac{\D^{\F_T}\dot\gamma}{dt}) \qquad\text{to}\qquad \frac{\D^{\mathcal{S}}\dot\gamma}{dt}
\ee 
in both the translationally reduced configuration space, $\F_T\simeq \R^{3N-3}\simeq \mathcal{Q}/\R^3$, and the \emph{relative configuration space}, $\mathcal{S}\simeq \mathcal{Q}_T/\SO(3)$.

We arrived at \eqref{eq:geodesic_KK} using precisely the same considerations. The only difference here is the meaning of the vertical and the horizontal directions. Note that in neither case we are working within a section, or a choice of frames, etc. Instead, we are working with a choice of an anholonomic basis for the tangent bundle to configuration space, which we are splitting into horizontal and vertical directions.

Our notation will differ from that of space-time Kaluza--Klein theory: we use capital Roman letters,  $I, J$, spanning $3N-6$ horizontal directions  (which are to be projected to base space), and we will maintain early Roman letters; $a, b$; for the vertical directions spanning three vertical directions. We thus replace the $e_\sigma$ used in space-time Kaluza--Klein by   
$$e_a:=J_{(a)}^\#=\boldsymbol {n}_{(a)}\times \cc_\alpha$$
 (to be understood as the $n$-tuple). We need not specify a basis for the horizontal vectors, which, according to \eqref{eq:vert_proj}, are of the form: 
 $\dot\cc_\alpha- \sum_\alpha m_\alpha \dot\cc_\alpha\times \cc_\alpha\,.$ Instead, we will label them $e_I$. 

The fundamental vertical vectors are not normalized. For a geodesic, instead of \eqref{eq:geod_dec} we will therefore obtain:
\be \dot\gamma=u^I e_I +q^a e_a\,. 
\ee
Here, 
$$q^a=(M^{-1})^{ab}\bb G(e_a, \gamma')=(M^{-1})^{ab}L_b\,,$$ 
where $L_a$ is the angular momentum component of the motion. According to Theorem~\ref{ftnt:Killing}, it is only $\bb G(e_a, \gamma')=:L_a$ that is conserved along the geodesic; i.e., is such that $\nabla_{\gamma'}\bb G(e_a, \gamma')=0$ for $\gamma$ a geodesic. Therefore, in the geodesic equation, this difference  will create another purely vertical term of the form $(\nabla_{\gamma'}q^a)e_a$. This gets projected out when we contract with $e_I$. 

There are additional differences in the computation of the geodesic motion because the Christoffel symbols will differ from those of  \eqref{eq:main}.  More precisely, we obtain the Christoffel symbols analogous to \eqref{eq:main}, but the base-space dependence of the vertical metric gives us new ones as well. The computations are laborious but straightforward (see e.g. \citep[eq. 5.59]{LittlejohnReinsch}). We will explicitly need only two of them: 
\be
\begin{cases}\w^I_{ab}=-\frac12M_{ab}{}^{; I}\\
\w^I_{Ja}=\w^I_{aJ}=-\frac12 M_{ab}F^{bI}_J\,,
\end{cases}
\ee 
where we used a semi-colon to denote the covariant derivative with respect to $\varpi$; i.e., $M_{ab}{}^{; I}=\bb G^{-1}(\dd_\varpi M, \cdot)$. $\dd_\varpi$ is the gauge-covariant exterior derivative; i.e., $\dd_\varpi M=\dd M-[M, \varpi]$, and  $M$ is the moment of inertia tensor given in \eqref{eq:MIT}.

Putting it all together, we obtain the analogue of  \eqref{eq:Lor_dev}:
\begin{eqnarray}
e^I({\nabla^P}_{\dot\gamma}\dot\gamma)&=&\dot{u}^I+u^J (u^K \w ^I_{KJ}+q^a\w^I_{aJ})+q^b (u^K\w^I_{K b}+ q^a\w^I_{ab}) \nonumber\\
&=& (\dot{u}^I+u^J u^K \bar \Gamma ^I_{KJ})- u^J q^ aM_{ab} F^{bI}_J -\frac12 q^aq^b M_{ab}{}^{;I}\nonumber\\
&=& (\dot{u}^I+u^J u^K \bar \Gamma ^I_{KJ})- u^J L_b F^{bI}_J -\frac12 L^a L^b (M^{-1})_{ab}^{;I}\,,
\end{eqnarray}
where the  term inside the parenthesis in the last two lines is just the intrinsic geodesic of the base space. Adding a gauge-invariant potential term $V$ --- e.g. a potential term that depends only on inter-particle separations --- would only add a gradient $V_{; I}$ to the equations of motion.

 In abridged notation, and in  analogy to \eqref{eq:KK_Lorentz}, we obtain: 
$$
\bb G(\pi_*(\frac{\D^{\F_T}\dot\gamma}{dt})- \frac{\D^{\mathcal{S}}\dot\gamma}{dt}, \cdot)= \mathbf{L}\cdot  \bb F(\dot \cc, \cdot)-\frac12  \mathbf{L}\cdot \dd_\varpi M^{-1}\cdot  \mathbf{L}\,,
$$
where the square brackets are matrix commutators, $\bb F$ is the bundle curvature, given in \eqref{eq:curv_rot}, and $\mathbf{L}$ is the angular momentum. For convenience, the equation was been written as a 1-form in configuration space (i.e. with $\langle v|=\bb G(|v\rangle, \cdot)$). Since $\bb F$ is horizontal, we can rewrite this difference more clearly in terms of the deviation $\Delta_I$ using $\cdot$ to indicate the Euclidean inner product in $\bb R^3$:
\be\label{config_Lorentz}
\Delta_I=\mathbf{L}\cdot  F_{IJ}\dot \gamma^J-\frac12  \mathbf{L}\cdot \nabla_IM^{-1}\cdot  \mathbf{L}\,,
\ee
which is \emph{our main result}.

The first term on the right hand side of \eqref{config_Lorentz} arises from the curvature of the bundle, and is clearly identifiable as a `Lorentz-like' force. The second term arises from the base space dependence of the vertical kinetic term, and is a generalized centrifugal force\footnote{We call this term the centrifugal force because, in the two-body case, it is just $\frac{\d}{\d r}(\frac{ |\mathbf{L}|^2}{2mr^2})$, which is the centrifugal force for the radial equation of motion as derived in Section~\S\ref{sub:the_two_body_system}. \label{ftnt:Coul} } coming from the mass-like potential, $\mathbf{L}\cdot M^{-1}\cdot  \mathbf{L}$, quadratic in $\mathbf{L}$.\footnote{Equation \eqref{config_Lorentz} is only concerned with the difference between geodesic motion intrinsic to relative configuration space and the projection of geodesic motion on the bundle. The intrinsic geodesic equation is the projected one for zero angular momentum, i.e., (cf. \citet[Eq. 4.77]{LittlejohnReinsch}):
$$ \frac{\D^{\mathcal{S}}\dot\gamma_I}{dt}=\ddot \gamma_I+\Gamma^J_{IK}\dot\gamma^K\dot \gamma_J-\nabla_I V\,,
$$
where $\Gamma^I_{JK}$ are the Christoffel symbols of the horizontally projected metric  \citet[5.59]{LittlejohnReinsch}, and, again, $I, J, K$ parametrize coordinates $q^I$ in relative configuration space, $\mathcal{S}$. These are the explicit equations of motion on the reduced space with zero angular momentum as advocated in \cite{Barbour_Bertotti}. }

The Lorentz-like force term of \eqref{config_Lorentz} seems at first somewhat surprising: what is usually labelled a fictitious force --- the Coriolis effects is dependent on the frame chosen --- makes an appearance on the base space. But this is precisely what  ``falling cat experiments'' --- showing a possible change of orientation even for a motion with vanishing angular momentum --- implies. For bodies with non-zero angular momentum, such a force will appear in any frame. The kinetic connection-form in the rotational bundle has curvature, and when we couple that to a motion with non-vanishing angular momentum, we obtain a generalized Coriolis force.

Following the method in the proof to Theorem~\ref{ftnt:Killing} --- which shows that the vertical velocity is conserved by geodesic motion --- it is easy to show that vertical velocity must also be conserved here. As mentioned, the second term on the right hand side of \eqref{config_Lorentz} goes beyond an analogue of the original Lorentz-force deviation in space-time-based Kaluza--Klein; it comes from the configuration space dependence of the kinematical metric along the orbits. For comparison, in the space-time case, the vertical inner product is induced by the Killing form on the Lie-algebra  (cf. \eqref{kil}), as it is for the translations ($\bb Q_{ab}=\delta_{ab}$), and thus in both these cases the vertical energy is independent of the base space point. For the rotations, the moment of inertia tensor $\bb Q_{ab}=M_{ab}$,  which is not independent of the relative configurations. The vertical kinetic term, $\mathbf{L}\cdot M^{-1}\cdot  \mathbf{L}$ then plays a role of a potential on relative configuration space.

In sum: because the rotational invariance of the Lagrangian is only broken by time-dependent transformations, we are able to find a suitable bundle structure under which to study the projection of its dynamics. But the dynamics can only be fully represented in the quotient space with the introduction of more structure: the curvature-form and the angular momentum. Nonetheless, as we saw in \eqref{eq:curv_rot}, the only quantity required for the representation of the projected dynamics that is not intrinsic to relative configuration space is the angular momentum, which is easily shown to be a constant of motion.  For translations, since $\bb F=0$ and $\bb Q_{ab}=\delta_{ab}$, the analogue to \eqref{config_Lorentz} vanishes. 

On base space points for which the moment of inertia, $M_{ab}$ is the identity, the centrifugal potential becomes a Casimir invariant, which does not affect the equations of motion. For zero angular momentum, the difference between the projected dynamics and the intrinsic dynamics also vanishes: there are no charges on which the effective potential and the Lorentz force act. 
 
More generally, we conjecture that two terms of \eqref{config_Lorentz} are the most general form of terms that can arise from the reduction of a theory whose Lagrangian is $\G$-invariant separately in its potential and kinetic energy terms when the connection-form is defined by orthogonality. Specifically, we conjecture that the possible terms that can be obtained through such a reduction include one extra term coming from the curvature of the bundle and another coming from the base-space variance of the vertical inner product.

\section{Conclusions}	\label{sec:conclusions}

\subsection{Technical summary}

Let us take a moment to summarise the key technical achievements of the paper. At the representational level, if a time-independent Lagrangian symmetry is specified, it is possible to produce a time-dependent generalisation of the symmetry by introducing a connection-form following the procedure outlined in Section~\S\ref{sec:KK_config}. Such a connection-form is singled out in that it allows an efficient organization and interpretation of the reduced equations of motion.

When the configuration space is endowed with a group-invariant metric, imposing orthogonality with respect to the orbits of the group action provides a canonical notion of horizontality --- or parallel transport --- between representatives of the system in the different orbits. Moreover, the identification of the kinetic term in the Lagrangian with the norm of the velocities under this inner-product, when conjoined to a fully-invariant potential term, ensures: (1) that the horizontal contribution to the dynamics is fully gauge-invariant (following the arguments at the beginning of Section~\S\ref{sec:KK_proj}) and (2) that the vertical velocities are dynamically conserved (following Theorem~\ref{ftnt:Killing} of Section~\S\ref{sub:gen Lorentz} and its caveats). Using the connection-form constructed in this way, it follows that  the projection of the full configuration-space dynamics onto the base space requires only one additional constant of motion for each conserved charge generated by the symmetry group.

The natural physical interpretation of the geometrical quantities introduced here is illuminating. Conserved charges are naturally represented by vertical velocities in the bundle. Zero total angular and linear momentum are represented by horizontal motion along rotational or translational fibres respectively. The Lorentz-like force is represented by the first term of \eqref{config_Lorentz}, which arises in this picture from the bundle curvature \eqref{eq:curv_formula}.  Centrifugal force is represented by the extra mass-like term (quadratic in $\mathbf{L}$) of \eqref{config_Lorentz}, which arises from the base-space dependence of the vertical metric \eqref{eq:MIT}.

Notably, both these novel terms act on the angular momentum charge and vanish when $\mathbf{L}=0$. In the case of translations, the analogous charge (i.e., the total linear momentum) can always be set to zero without consequence. In the standard Kaluza--Klein treatment for semi-simple Lie groups over space-time, the fibre inner product is rigid, and the second potential term does not appear. These two terms therefore capture all the novel representational features of rotation.

\subsection{Conclusions and prospectus}

Our analysis has identified the formal differences between translation and rotation that have made rotation such a difficult problem for the relationalist. We can understand these differences by considering that under translations there is no difference between the intrinsic dynamics in the reduced space and the projected dynamics from the configuration space. One obtains conserved charges but there is no curvature or extra potential to act upon them. Motion on the reduced space does not carry any information about the (constant) vertical velocity in the full configuration space and is therefore easily expressed in relational terms.

In contrast, under rotations one acquires two non-trivial terms: a Lorentz-force-like term due to curvature and another mass-like quadratic potential term mediated by the moment of inertia (see footnote \ref{ftnt:Coul}). It should be emphasised that both these terms have a simple dependence on the natural geometric structures defined entirely on the relative configuration space and the straightforward coupling of these structures to an $SO(3)$ charge. What the simple and elegant form of these terms illustrates is that an account of Coriolis and centrifugal effects due to rotation in absolute space, while always available, is certainly not necessary and may not even be the most appealing account available.

The construction we have given in terms of principal fibres bundles can help to clarify some of the limitations of the symplectic reduction of \cite{marsden1992lectures} and the frame-based approach of \cite{LittlejohnReinsch}. These approaches make use of frames defined in terms of the \emph{instantaneous} configurations of the system. But because the rotational bundle is curved, no frame of this kind can be found that will conserve angular momentum when it is non-zero.  In the fibre bundle picture however one can always employ an anholonomic frame along the geodesics of the kinematical metric, so that angular momentum is guaranteed to be conserved by Theorem~\ref{ftnt:Killing}. This gives a \emph{dynamically} defined inertial frame in terms of relational quantities.

The analysis of Section~\S\ref{sec:LR} not only gives us the tools required to identify the novel features that distinguish rotation from translation or standard Kaluza--Klein theories; it also gives us the tools required to write down and assess a theory of rotation in relational terms. Three important considerations arise immediately from the current work.

First, the formal analogies between translation and rotation symmetry in Newtonian mechanics and $U(1)$ symmetry in electromagnetism suggest two different perspectives on conserved charges: that of the reduced space where the conserved charge --- i.e., linear momentum, angular momentum, and electric charge respectively --- is additional structure that must be specified; and that of the relevant bundle where they are completely geometrized. Using these analogies, it is possible to fruitfully compare the relational versus absolute positions of Newtonian mechanics to their electromagnetic counterparts in a way that was not possible before the construction presented here. 

Second, in Kaluza--Klein theory, the motion of a charged particle in an electromagnetic field can be geometrically  described without invoking either the charge or the electromagnetic force. One says that charged particle motion arises merely from geometrical properties of the (higher-dimensional) space in which the particle lives. Be that as it may be, most physicists today reject the geometrized origin in favour of postulating one more field --- the electromagnetic field ---  alongside its charge.  While a $U(1)$ fibre bundle is sometimes used to model electromagnetic effects, the fibres are interpreted as `internal' directions of a very different character than the extra space-time dimenions of Kaluza--Klein theory. The situation seems entirely analogous for rotations and angular momentum: one can have a description that geometrizes the motion in absolute space or one can use an enriched relational ontology that postulates new fields and corresponding conserved charges with remarkably simple interaction terms. 
 
Third, the appearance of charge-dependent terms in the base-space dynamics suggests that the most appropriate ontology for describing rotations, and indeed any phenomena that can be modelled with the Kaluza--Klein-inspired construction presented here, is an \emph{enriched ontology} that adds extra conserved charges to a purely relational theory. It is precisely the conservation of these charges that permits the use of such an enriched structure. If the phenomena \emph{could not} be modelled using conserved charges but \emph{could} be modelled using time-varying charges, then the entire projection procedure would be invalidated. Clearly, the justification for this projection is different in standard Kaluza--Klein theory --- where it was the epistemic (``too small to see'') argument we rehearsed in the Introduction --- and in Newtonian mechanics. But if the phenomena could not be saved with conserved charges alone, or if the equations of motion on the reduced space were uninformative and practically useless, there would be a compelling case against the relationalist. 

Similarly, the absolutist must justify why the relevant charge is observed to be conserved in time or, equivalently, why a rotational symmetry should be imposed at the level of the action. The absolutist must say why the world appears to accommodate significantly less possibilities than those that are allowed by their ontology. A full analysis of these compelling considerations will have to wait for the more complete treatment we will give in our second paper.

 \section*{Acknowledgements}

We would like to extend heartfelt appreciation to Jeremy Butterfield for extensive meticulous comments on the paper. His guidance was indispensable to the final outcome of the paper. We would also like to thank Bryan Roberts and Karim Th\'ebault for helpful suggestions and encouragement. HG would like to especially thank Robert Littlejohn for clarifying many questions regarding the work of \cite{LittlejohnReinsch}. We would like to thank Erik Curiel for his inventive title suggestion: ``Charging the Relationalist for Angular Momentum,'' which we used as the section title for Section~\S\ref{sec:LR}. Finally, we would like to thank Julian Barbour for bringing us together many years ago in College Farm to discuss relational issues in physics.

\begin{appendix}
\section{Auxiliary computations for the connection-form of a bundle $P$}\label{app:connection}

A connection in  $P$ is a sub-bundle $\HH$ of  $TP$ such that $g_*(\HH_p)=\HH_{g\cdot{p}}$ and $\HH_p\oplus\V_p=T_pP$. We call $\HH$ the horizontal bundle.

First, since the vertical space is defined by the span of $\iota$, as defined in \eqref{eq:iota}:
\begin{eqnarray}
\iota_p\left(\Ad(g^{-1})\xi\right)=\frac{d}{dt}{}_{|t=0}({\exp}(t\left(\Ad(g^{-1})\xi\right))\cdot{p})=\frac{d}{dt}{}_{|t=0}
(g^{-1}{\exp}(t\xi)g\cdot{p})\nonumber\\ 
~~\therefore~~g_*(\iota_p\left(\Ad(g^{-1})\xi\right))=\frac{d}{dt}{}_{|t=0}({\exp}(t\xi)g\cdot{p})=\iota_{g\cdot{p}}(\xi)\label{eq:iota_equi}\end{eqnarray}  
Thus $g_*(\V_p)=\V_{g\cdot{p}}$. Therefore the decomposition $T_pP=\HH_p\oplus\V_p$ is $G$-invariant. 



As in the main text, we denote the smooth projections from  $TP$ to the  horizontal bundle as $\widehat{H}$ and mutatis mutandis for the vertical bundle $\widehat{{V}}$. Clearly, from $G$-invariance of the decomposition,   $\widehat{H}_{g\cdot{p}}\circ g_*=g_*\circ\widehat{H}_p$, and mutatis mutandis for the vertical projection. Explicitly, if $w\in{T_pP}$ then, since $w=w_h+w_v=\hat{H}(w)+\hat{V}(w)$ and, since $G_*$ is linear, and $g_*(w_v)\in\V_{g\cdot{p}}$, we automatically obtain $g_*(w_h)\in\HH_{g\cdot{p}}$.  In other words,  $\widehat{H}_{g\cdot{p}}g_*(w)=g_*(\hat{H}(w))$. 
We find: \begin{theo}
If $\HH$ is a connection on $P$, then for all $p\in{P}$: 
$\pi_p{}_*:T_pP\ra{T_{\pi(p)}M}$ restricts to a linear isomorphism $h_p:\HH_p\ra{T_{\pi(p)}M}$ such that $h_{g\cdot{p}}\circ g_*{}{|_{\HH_p}}=h_p$.\end{theo}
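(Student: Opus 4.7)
The plan is to establish the three required assertions about $h_p$ in turn---linearity, bijectivity, and $G$-equivariance---each by invoking a single defining property of the bundle structure.

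First I would observe that the vertical subspace coincides exactly with the kernel of the projection at $p$, i.e.\ $\V_p=\ker\pi_{p*}$. The inclusion $\V_p\subset\ker\pi_{p*}$ is immediate from the definition \eqref{eq:iota}: a curve of the form $t\mapsto\exp(t\xi)\cdot p$ lies entirely in the fibre through $p$, so $\pi$ sends it to a constant and its velocity $\iota_p(\xi)$ is killed by $\pi_{p*}$. The reverse inclusion follows from dimension count: since $G$ acts freely on $P$, $\dim\V_p=\dim\fg=k=n-m$, and since $\pi$ is a surjective submersion (locally $\pi^{-1}(U)\cong U\times G$), $\dim\ker\pi_{p*}=n-m$ as well.

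Given this, linearity of $h_p$ is inherited from linearity of $\pi_{p*}$. For injectivity, if $w\in\HH_p$ satisfies $\pi_{p*}(w)=0$, then $w\in\HH_p\cap\V_p=\{0\}$ by the direct sum decomposition $\HH_p\oplus\V_p=T_pP$, so $w=0$. For surjectivity, I would give a direct horizontal lift: given $v\in T_{\pi(p)}M$, pick any $\tilde v\in T_pP$ with $\pi_{p*}(\tilde v)=v$ (possible because $\pi$ is a submersion), and note that $\hat{H}(\tilde v)\in\HH_p$ with
\[
\pi_{p*}(\hat H(\tilde v))=\pi_{p*}(\tilde v)-\pi_{p*}(\hat V(\tilde v))=v-0=v,
\]
since $\hat V(\tilde v)\in\V_p=\ker\pi_{p*}$. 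Alternatively, one can simply match dimensions: $\dim\HH_p=m=\dim T_{\pi(p)}M$, and injectivity of a linear map between equi-dimensional spaces implies surjectivity.

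Finally, for the equivariance $h_{g\cdot p}\circ g_*|_{\HH_p}=h_p$, I would use that the projection $\pi$ is by construction constant along orbits, so $\pi\circ L_g=\pi$ where $L_g:P\to P$ denotes the action of $g$. Differentiating this identity at $p$ gives $\pi_{g\cdot p*}\circ g_*=\pi_{p*}$ as maps $T_pP\to T_{\pi(p)}M$. Restricting to $\HH_p$ and recalling that $g_*(\HH_p)=\HH_{g\cdot p}$ (a defining property of the horizontal distribution, already used in Section~\S\ref{sub:PFB}), the left-hand side becomes $h_{g\cdot p}\circ g_*|_{\HH_p}$ and the right-hand side is $h_p$, as required. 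The only real content of the argument is the identification $\V_p=\ker\pi_{p*}$; everything else is bookkeeping in linear algebra and the functoriality of the pushforward.
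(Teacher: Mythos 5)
Your proof is correct and follows essentially the same route as the paper's: identify $\V_p$ with $\ker\pi_{p*}$, use the direct sum $\HH_p\oplus\V_p=T_pP$ to get the linear isomorphism, and derive equivariance from the $G$-invariance of $\pi$ together with $g_*(\HH_p)=\HH_{g\cdot p}$. You simply fill in the details the paper dismisses as ``easy to show'' and phrase the equivariance step a little more directly (differentiating $\pi\circ L_g=\pi$ rather than composing with $\widehat{H}_p$ and cancelling by surjectivity), but the underlying argument is the same.
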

\begin{proof}
In finite dimensions, it is easy to show that the projection restricts to a linear isomorphism (for infinite-dimensions cf. \citep{Ebin, YangMillsSlice}), since the vertical space is in the kernel of the projection, which is of maximal rank, i.e. surjective. Moreover, $h_{g\cdot{p}}\circ\widehat{H}_{g\cdot{p}}=\pi_*{}_{g\cdot{p}}$, therefore: 
\begin{gather*}
h_{g\cdot{p}}\circ\widehat{H}_{g\cdot{p}}\circ\pi_*{}
=\pi_*{}_{g\cdot{p}}\circ\pi_*{}=\pi_*{}_p=h_p\circ\widehat{H}_p\\
\therefore~h_p\circ\widehat{H}_p=h_{g\cdot{p}}\circ\pi_*{}\circ\widehat{H}_{p} 
\end{gather*}
and since $\widehat{H}_p$ is surjective, we obtain the theorem.
\end{proof}

For a given $\HH$, we define the connection-form as 
$\omega:=\iota^{-1}\circ\widehat{V}$. That is:
\begin{align*}
\omega_p:~& T_pP\ra\fg\\
\phantom{\omega_p:}~& ~u\mapsto\iota_p^{-1}\circ\widehat{V}_p(u) 
\end{align*}

Of course, for $v\in\V_p$ then $\omega(v)=\iota^{-1}(v)$.  Moreover: 
\begin{gather*}
(g^*\omega)_p=\omega_{g\cdot{p}}\circ\pi_*{}=\iota^{-1}_{g\cdot{p}}\circ\widehat{V}_{g\cdot{p}}\circ\pi_*{}\\ 
=\iota^{-1}_{g\cdot{p}}\circ\pi_*{}\circ\widehat{V}_p=(g^*\iota^{-1})_p\circ{\widehat{V}}_p=\Ad(g)\circ\iota^{-1}_p\circ\widehat{V}_p=\Ad(g)\omega_p\\
\end{gather*} and therefore
\be\label{foconex}~~~~g^*\omega=\Ad(g)\omega\ee 
That shows  $\omega$ is a connection-form.

Conversely, given a connection-form, we can find the corresponding (equivariant) horizontal space through its kernel.  That is, since $\iota_p$ is a linear isomorphism over $\V_p$~, clearly $ \Ker(\omega_p)$ complements $\V_p$~. Moreover, if $u\in\Ker(\omega_p)$ then  $g_*(u)\in\Ker(\omega_{g\cdot{p}})$ since 
$$ \omega_{g\cdot{p}}\circ\pi_*{}(u)=\Ad(g)\omega_p(u)=0.$$
Therefore, since $g_*$ is a linear isomorphism, $\Ker(\omega)$ is a  $G$-invariant sub-bundle of $TP$, complementing $\V$. That is, we can define the horizontal space as $\HH_p=\Ker\omega_p$. Indeed, there is a bijective correspondence: $\HH\leftrightarrow{\omega}$. 

\end{appendix}

\bibliographystyle{apacite} 
\bibliography{references}
\end{document}